\newtheorem{fact}[theorem]{Fact}
\newcommand{\defeq}{\stackrel{\scriptscriptstyle\text{def}}{=}}
\newcommand{\N}{\mathbb{N}}                    
\newcommand{\Mv}{M_{\vec{v}}}
\newcommand{\multiset}[1]{\Lbag#1\Rbag}        
\renewcommand{\PP}{\mathcal{P}}                  
\newcommand{\trans}[1]{\xrightarrow{#1}}       
\newcommand{\pre}{\mathit{pre}} 
\newcommand{\post}{\mathit{post}} 
\newcommand{\preset}[1]{{}^\bullet #1}
\newcommand{\postset}[1]{{#1}^\bullet}
\newcommand{\unorm}[1]{\|{#1}\|_u}
\newcommand{\lnorm}[1]{\|{#1}\|_l}
\newcommand{\sem}[1]{\llbracket{#1}\rrbracket}
\begin{document}
\title{Parameterized Analysis of Immediate Observation Petri Nets \thanks{This project has received funding from the European Research Council (ERC) under the European Union's Horizon 2020 research and innovation programme under grant agreement No 787367 (PaVeS)}}
\titlerunning{Parameterized Analysis of Immediate Observation Petri Nets}
\author{Javier Esparza\inst{1}\orcidID{0000-0001-9862-4919} \and
Mikhail (Michael) Raskin\inst{2,3}\orcidID{0000-0002-6660-5673} \and
Chana Weil-Kennedy\inst{3}\orcidID{0000-0002-1351-8824}}
\authorrunning{J. Esparza et al.}
\institute{Technical University of Munich
\email{esparza@in.tum.de} \and 
Technical University of Munich
\email{raskin@in.tum.de} \and 
Technical University of Munich
\email{chana.weilkennedy@in.tum.de} 
}
\maketitle 
\begin{abstract}
We introduce immediate observation Petri nets, a class of interest in the study of population protocols (a model of distributed computation), and enzymatic chemical networks. In these areas, relevant analysis questions translate into \emph{parameterized} Petri net problems: whether an infinite set of Petri nets with the same underlying net, but different initial markings, satisfy a given property.
We study the parameterized reachability, coverability, and liveness problems for immediate observation Petri nets. We show that all three problems are in \PSPACE \ for infinite sets of initial markings defined by counting constraints, a class sufficiently rich for the intended application. This is remarkable, since the problems are already \PSPACE-hard when the set of markings is a singleton, i.e., in the non-parameterized case. We use these results to prove that the correctness problem for immediate observation population protocols is \PSPACE-complete, answering a question left open in a previous paper. 

\keywords{Petri Nets \and Reachability Analysis \and Parameterized Verification \and Population Protocols}
\end{abstract}
\section{Introduction}
We study the theory of \emph{immediate observation Petri nets}, a class of Petri nets with 
applications to the study of population protocols and chemical reaction networks, two models of distributed computation.

Population protocols are a formalism for the study of ad hoc networks of tiny computing devices without any infrastructure. They were introduced by Angluin \textit{et al.}~\cite{AADFP04}, and have been very intensely studied, in particular in recent years (see e.g.~\cite{AlistarhAEGR17,AlistarhAG18,AlistarhG18,ElsasserR18}). The
model postulates a ``soup'' of finite-state, indistinguishable agents interacting in pairs.  Formally, a population protocol has a finite set of states $Q$ and a set of transitions of the form \((q_1, q_2) \mapsto (q_3, q_4)\), which allow two agents in states \(q_1\) and \(q_2\) to interact and simultaneously move to \(q_3\) and \(q_4\).
A global state of the protocol, called a \emph{configuration}, is a mapping $C$ that assigns to each state $q$ the current number $C(q)$ of agents in $q$.  
A protocol has a set of initial configurations. Intuitively, each initial configuration corresponds to an input, and the purpose of a protocol is to compute a boolean output, $0$ or $1$, for each input. A protocol outputs $b$ for a given initial configuration $C$ if in all fair runs starting at $C$ (with respect to a certain fairness condition), all agents eventually agree to output $b$. So, loosely speaking, population protocols compute by reaching a stable consensus. The \emph{predicate computed by a protocol} is the function that assigns to each initial configuration $C$ the boolean output computed by the protocol when started at $C$.

Even this very abstract description shows that a population protocol is ``nothing but'' a (place/transition) Petri net: a state corresponds to a place, a transition of the protocol to a  net transition with two input and two output places, an agent to a token, and a configuration to a marking.
In the last years, this connection was exploited to address the problem of proving population protocols correct.
The fundamental correctness problem for population protocols asks, given a protocol and a predicate, whether the protocol computes the predicate. 
This question  was proved decidable in \cite{EsparzaGLM15,EsparzaGLM17}, but, unfortunately, the same papers also showed that the correctness problem is at least as hard as Petri net reachability, and so of non-elementary complexity \cite{CLLLM18}.

In their seminal paper on the expressive power of population protocols \cite{AAER07}, Angluin \textit{et al.} defined subclasses corresponding to different communication primitives between agents. In the standard model, agents communicate through \emph{rendez-vous}: transitions \((q_1, q_2) \mapsto (q_3, q_4)\) formalize that both partners exchange full information about their current states, and update them based on it. Angluin \textit{et al.} introduced \emph{immediate observation} protocols, called IO protocols for short, whose transitions have the form $(q_1, q_2) \mapsto (q_1, q_3)$. 
Intuitively, in an IO protocol an agent can change its state from $q_2$ to $q_3$ by \emph{observing} that another agent is in state $q_1$; the agent in state $q_1$ may not even know that it is being observed. 
A characterization of the predicates computable by IO protocols was given in \cite{AAER07}, and in \cite{EsparzaGMW18} Esparza \textit{et al.} studied the complexity of the correctness problem.
They showed that it was \PSPACE-hard and solvable in \EXPSPACE, and left the problem of closing this gap for future research. 

In this paper we study the theory of \emph{immediate observation Petri nets} (IO nets), the Petri nets underlying immediate observation protocols. Our initial motivation is their application to population protocol problems, especially the gap just mentioned. However, IO nets also model \emph{networks of enzymatic chemical reactions}, in which an enzyme $E$ catalyzes the formation of product $P$ from substrate $S$ \cite{baldan2010petri,marwan2011petri}. An example of application of Petri net techniques to such a network is presented in \cite{angeli2007petri}.\footnote{The Petri nets of \cite{angeli2007petri} are in fact slightly more general than IO nets, but equivalent to them for properties that depend only on the reachability graph, as are the net properties studied in \cite{angeli2007petri}.}

Analysis problems for population protocols or chemical networks are  \emph{parametric} in the number of agents or the number of molecules. In other words, they ask whether the system satisfies a property \emph{for any number of agents} or \emph{for any number of molecules}. When formalized as Petri nets problems, they become questions of the form ``does an \emph{infinite set} of Petri nets differing only in their initial markings satisfy a given property?'' We investigate parameterized versions of the standard reachability, coverability, and liveness problems for IO nets in which the set of initial markings is a \emph{cube}, i.e., a set of markings obtained by attaching to each place a lower bound and an upper bound (possibly infinite) for the number of tokens. We prove that, remarkably, while the standard problems are \PSPACE-hard even in the non-parameterized case, they remain in \PSPACE \ in the parameterized case. This is in strong contrast with the situation for more general classes of nets.
For example, while  the non-parametric problems are in \PSPACE\  for conservative nets or 1-safe nets, their ``cube-versions'' become  \EXPSPACE-hard or even non-elementary.  As an application of our results, we close the gap left open in \cite{EsparzaGMW18}, and prove that the correctness problem for IO protocols is \PSPACE-complete. 

For space reasons, all missing proofs and some technical details are relegated to the full version of this article \cite{EsparzaRW19}.
\section{Preliminaries}
\paragraph{Multisets.} A \emph{multiset} on a finite set \(E\) is a mapping \(C \colon E \rightarrow \N\), i.e. for any $e\in E$, \(C(e)\) denotes the number of occurrences of element \(e\) in \(C\).
Let $\multiset{e_1,\ldots,e_n}$ denote the multiset $C$ such that $C(e)=|\{j\mid e_j=e\}|$.
Operations on \(\N\) like addition or comparison are extended to multisets by defining them component wise on each element of \(E\).
Subtraction is allowed as long as each component stays non-negative.
We define $|C|\defeq\sum_{e\in E} C(e)$ the sum of the occurrences of each element in $C$. 
Given a total order $e_1 \prec e_2 \prec \cdots \prec e_n$ on $E$, a multiset $C$ can be 
equivalently represented by the vector $(C(e_1), \ldots, C(e_n))\in \N^n$.

\paragraph{Place/transition Petri nets with weighted arcs.} A \emph{Petri net} $N$ is a triple $(P,T,F)$ consisting of a finite set of \emph{places} $P$, a finite set of \emph{transitions} $T$ and a \emph{flow function} $F \colon (P \times T) \cup (T \times P) \rightarrow \mathbb{N}$. 

A \emph{marking} $M$ is a multiset on $P$, and we say that a marking $M$ puts $M(p)$ \emph{tokens} in place $p$ of $P$. The \emph{size} of $M$, denoted by $|M|$, is the total number of tokens in $M$.
The \emph{preset} $\preset{t}$ and \emph{postset} $\postset{t}$ of a transition $t$ are the multisets on $P$ given by $\preset{t}(p)=F(p,t)$ and $\postset{t}(p)=F(t,p)$. A transition $t$ is \emph{enabled} at a marking $M$ if $\preset{t} \leq M$, i.e. $\preset{t}$ is component-wise smaller or equal to $M$.
If $t$ is enabled then it can be \emph{fired}, leading to a new marking $M'=M - \preset{t} + \postset{t}$. 
We note this $M \xrightarrow{t} M'$.

\paragraph{Reachability and coverability}
Given $\sigma=t_1 \ldots t_n$ we write $M \xrightarrow{\sigma} M_n$ when $M \xrightarrow{t_1} M_1 \xrightarrow{t_2} M_2 \ldots \xrightarrow{t_n} M_n$, and call $\sigma$ a \emph{firing sequence}. 
We write $M' \trans{*} M''$ if $M' \xrightarrow{\sigma} M''$ for some $\sigma \in T^*$, and say that $M''$ is \emph{reachable} from $M'$. 
A marking $M$ \emph{covers} another marking $M'$, written $M \geq M'$ if $M(p) \geq M'(p)$ for all places $p$.  
A marking $M$ is \emph{coverable} from $M'$ if there exists a marking $M''$ such that $M' \trans{*} M'' \geq M$.

\paragraph{Conservative Petri nets.}A \emph{Petri net} $N=(P,T,F)$ is \emph{conservative} if there is a mapping $I \colon P \rightarrow \mathbb{Q}_{>0}$ such that $\sum_{p \in P} I(p) \cdot \preset{t}(p) = \sum_{p \in P} I(p) \cdot \postset{t}(p)$ for all $t$. Further, $N$ is \emph{$\vec{1}$-conservative} if it is conservative with $I$ equal to $1$ over all $P$ (see \cite{MayrW14}). It follows immediately from the definitions that if $N$ is conservative and $M \trans{*} M'$, then 
$\sum_{p \in P} I(p) \cdot M(p) = \sum_{p \in P} I(p) \cdot M'(p)$.
\section{A Primer on Population Protocols}
\label{IOPPprimer}

As mentioned in the introduction, a  population protocol consists of a set of states $Q$ and a set of transitions 
$T \subseteq Q^2 \times Q^2$. A transition $\big( (q_1, q_2), (q_3,q_4) \big) \in T$ is denoted $(q_1, q_2)\mapsto (q_3, q_4)$.
A \emph{configuration} is a multiset of states. A configuration, say $C$, such that $C(q_1)=2$ and $C(q_2)=1$,
indicates that currently there are two agents in state $q_1$ and one agent in state $q_2$. 
The connection to Petri nets is immediate: The Petri net modeling a protocol has one place for each state, and one transition for every transition of the protocol. If transition $t$ of the Petri net models $(q_1, q_2)\mapsto (q_3, q_4)$, then $\preset{t}=\multiset{q_1, q_2}$, and $\postset{t}=\multiset{q_3,q_4}$.
An agent in state $q$ is modeled by a token in place $q$. A configuration $C$ with $C(q)$ agents in state $q$ is modeled by the marking putting $C(q)$ tokens in place $q$ for every $q \in Q$. 
Observe that the transitions of the net do not change the total number of tokens, and so we have:

\begin{fact}\label{fact:cons}
Petri nets obtained from population protocols are $\vec{1}$-conservative.
\end{fact}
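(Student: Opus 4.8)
The plan is to exhibit the witness $I \colon P \to \Q_{>0}$ required by the definition of conservativity and to observe that it can be taken to be the constant function $1$. Concretely, fix a population protocol with states $Q$ and transition set $T \subseteq Q^2 \times Q^2$, and let $N = (P,T,F)$ be the Petri net it induces, so that $P = Q$ and every net transition $t$ models some protocol transition $(q_1,q_2) \mapsto (q_3,q_4)$ with $\preset{t} = \multiset{q_1,q_2}$ and $\postset{t} = \multiset{q_3,q_4}$.

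First I would compute $\sum_{p \in P} \preset{t}(p)$. By definition this sum is the size $|\preset{t}|$ of the multiset $\preset{t} = \multiset{q_1,q_2}$, which equals $2$ regardless of whether $q_1$ and $q_2$ coincide (in the degenerate case the single place $q_1$ simply carries two tokens). The identical argument applied to $\postset{t} = \multiset{q_3,q_4}$ yields $\sum_{p \in P} \postset{t}(p) = |\postset{t}| = 2$. Hence $\sum_{p \in P} \preset{t}(p) = \sum_{p \in P} \postset{t}(p)$ for every $t \in T$, which is exactly the condition $\sum_{p \in P} I(p)\cdot\preset{t}(p) = \sum_{p \in P} I(p)\cdot\postset{t}(p)$ instantiated with $I \equiv 1$. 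Therefore $N$ is conservative, and in fact $\vec{1}$-conservative.

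There is essentially no obstacle here: the whole content is the remark that a protocol transition describes a pairwise interaction and so neither creates nor destroys agents, whence the induced net transition preserves the token count. The only point worth a word of care is the case $q_1 = q_2$ (or $q_3 = q_4$), where one must recall that $\multiset{q_1,q_2}$ then denotes the multiset assigning $2$ to $q_1$, so its size is still $2$; once this is noted, the proof is immediate.
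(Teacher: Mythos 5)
Your proof is correct and follows exactly the paper's (one-line) justification: the paper simply observes that transitions of the induced net preserve the total token count, which is your computation $|\preset{t}| = 2 = |\postset{t}|$ instantiating the conservativity condition with $I \equiv 1$. Your extra care about the degenerate case $q_1 = q_2$ is a fine, if not strictly necessary, addition.
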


Population protocols are designed to compute predicates $\varphi \colon \mathbb{N}^k \rightarrow \{0,1\}$. We first give an informal explanation of how a protocol computes a predicate, and then a formal definition using Petri net terminology. A protocol for $\varphi$ has a distinguished set of input states $\{q_1, q_2, \ldots, q_k\} \subseteq Q$. Further, each state of $Q$, initial or not, is labeled with an \emph{output}, either $0$ or $1$. Assume for example $k=2$. In order to compute $\varphi(n_1, n_2)$, we first place $n_i$ agents in $q_i$  for $j=1,2$, and $0$ agents in all other states. This is the initial configuration of the protocol for the input $(n_1, n_2)$. Then we let the protocol run. The protocol satisfies that in every \emph{fair} run starting at the initial configuration (fair runs are defined formally below), eventually all agents reach states labeled with $1$, and stay in such states forever, or they reach states of labeled with $0$, and stay in such states forever. So, intuitively, in all fair runs all agents eventually ``agree'' on a boolean value. By definition, this value is the result of the computation, i.e, the value of $\varphi(n_1, n_2)$.

Formally, and in Petri net terms, fix a Petri net $N=(P, T, F)$ with $|\preset{t}|=2=|\postset{t}|$ for every transition $t$.
Further, fix a set $I=\{p_1, \ldots , p_k\}$ of \emph{input places}, and a \emph{function} $O \colon P \rightarrow \{0,1\}$.  A marking $M$ of $N$ is a \emph{$b$-consensus} if $M(p)>0$ implies $O(p)=b$. A $b$-consensus $M$ is \emph{stable} if every marking reachable from $M$ is also a $b$-consensus. A firing sequence $M_0 \trans{t_1} M_1 \trans{t_2} M_2 \cdots$ of $N$ is \emph{fair} if it is finite and ends at a deadlock marking, or if it is infinite and the following condition holds for all markings $M,M'$ and $t \in T$: if  $M \trans{t} M'$ and $M=M_i$ for infinitely many $j \geq 0$, then $M_j \trans{t_{j+1}}M_{j+1} \, = \, M \trans{t} M'$ for infinitely many $j \geq 0$. 
In other words, if a fair sequence reaches a marking infinitely often, then all the transitions enabled at that marking will be fired infinitely often from that marking.
A fair firing sequence \emph{converges to $b$} if there is $j \geq 0$ such that $M_{j}$ is a $b$-consensus for every marking $j \geq i$ of the sequence. For every $\vec{v} \in \mathbb{N}^k$ with $|\vec{v}| \geq 2$ let $\Mv$ be the marking given by $\Mv(p_i) = \vec{v}_i$ for every $p_i \in I$, and $\Mv(p) = 0$ for every $p \in  P \setminus I$. We call $\Mv$ the \emph{initial marking for input $\vec{v}$}. The net \emph{$N$ computes the predicate $\varphi \colon \mathbb{N}^k \rightarrow \{0,1\}$} if for every $\vec{v} \in \mathbb{N}^k$, every fair firing sequence starting at $\Mv$ converges to $b$.

\begin{figure}
\vskip-0.5cm
\centering%
\begin{subfigure}[t]{0.49\textwidth}
\centering%
\resizebox{3cm}{!}{
\begin{tikzpicture}

\node[circle, draw]
(Node1)
{$q_1$};
\node[below=0.8 of Node1, rectangle, draw]
(Trans1)
{$t_1$};
\node[left=1.0 of Trans1, rectangle, draw]
(Trans3)
{$t_3$};
\node[below=0.4 of Trans1, circle, draw]  
(Node2)
{$q_2$};
\node[below=0.8 of Node2, rectangle, draw]
(Trans2)
{$t_2$};
\node[right=1.0 of Trans2, rectangle, draw]
(Trans4)
{$t_4$};
\node[below=0.4 of Trans2, circle, draw]
(Node3)
{$q_3$};  

\draw[->] (Node1) to node[left] {$2$} (Trans1);
\draw[->,bend right=20] (Trans1) to (Node1);
\draw[->] (Trans1) to (Node2);
\draw[->] (Node2) to node[left] {$2$} (Trans2);
\draw[->,bend right=20] (Trans2) to (Node2);
\draw[->] (Trans2) to (Node3);
\draw[->] (Node1) to (Trans3);
\draw[->] (Node2) to (Trans4);
\draw[->,bend right=20] (Node3) to (Trans4);
\draw[->] (Trans4) to node[above] {$2$} (Node3);
\draw[->,bend left=20] (Node3) to (Trans3);
\draw[->] (Trans3) to node[left] {$2$} (Node3);

\end{tikzpicture}
}%
\caption{Net for $\PP_1$}\label{figure-running-example}%
\end{subfigure}\hfill%
\begin{subfigure}[t]{0.49\textwidth}
\centering%
\resizebox{5cm}{!}{
\begin{tikzpicture}
\node[circle, draw]%
(Node0)%
{$q_0$};%
\node[above left=1.5 and 2 of Node0, circle, draw]%
(Node1)%
{$q_1$};%
\node[above right=1.5 and 2 of Node0, circle, draw]%
(Node2)%
{$q_2$};%
\node[below=0.4 of Node0, rectangle, draw]%
(Trans5)%
{$t_5$};%
\node[below=0.8 of Trans5, circle, draw]%
(Node3)%
{$q_3$};%
\node[left=2 of Trans5, rectangle, draw]%
(Trans4)%
{$t_4$};%
\node[right=2 of Trans5, rectangle, draw]%
(Trans6)%
{$t_6$};%
\node[above=1.4 of Node0, rectangle, draw]%
(Trans2)%
{$t_2$};%
\node[below left=1.0 and 1.2 of Trans2, rectangle, draw]%
(Trans1)%
{$t_1$};%
\node[below right=1.0 and 1.2 of Trans2, rectangle, draw]%
(Trans3)%
{$t_3$};%
\draw[->] (Node1) to (Trans4);%
\draw[->] (Node0) to (Trans5);%
\draw[->] (Node2) to (Trans6);%
\draw[->, bend left=20] (Node3) to (Trans4);%
\draw[->, bend right=10] (Node3) to (Trans5);%
\draw[->, bend right=20] (Node3) to (Trans6);%
\draw[->] (Trans4) to node[above] {$2$} (Node3);%
\draw[->] (Trans5) to node[left] {$2$} (Node3);%
\draw[->] (Trans6) to node[above] {$2$} (Node3);%
\draw[->] (Trans2) to (Node2);%
\draw[->] (Trans2) to (Node0);%
\draw[->] (Node1) to node[above] {$2$} (Trans2);%
\draw[->] (Node2) to node[left] {$2$} (Trans3);%
\draw[->] (Trans3) to (Node0);%
\draw[->] (Trans3) to (Node3);%
\draw[->] (Node1) to (Trans1);%
\draw[->] (Node2) to (Trans1);%
\draw[->] (Trans1) to (Node0);%
\draw[->] (Trans1) to (Node3);%
\end{tikzpicture}
}%
\caption{Net for $\PP_2$}\label{figure-second-protocol}%
\end{subfigure}%
\caption{Petri nets underlying population protocols.}%
\vskip-0.5cm
\end{figure}
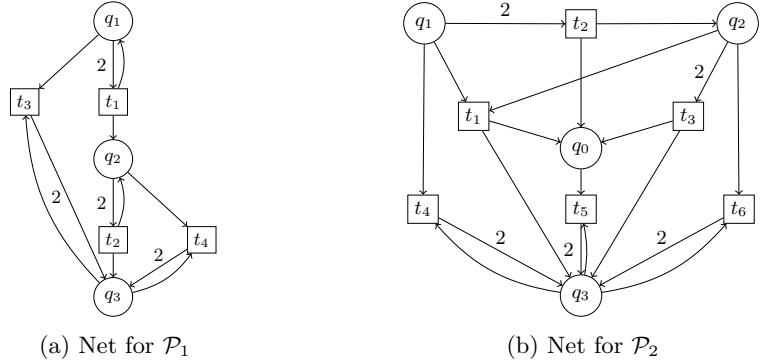

\begin{example}
\label{ex:twoPP}
We exhibit two population protocols that compute the predicate $\varphi(x) \defeq [x \geq 3]$, and their corresponding Petri nets.

The first protocol $\PP_1$ has states $Q_1= \{ q_1,q_2,q_3 \}$ and transitions 
$(q_a, q_a)\mapsto(q_{a+1},q_a)$ and $(q_a,q_3)\mapsto(q_3,q_3)$ for $a=1,2$.
The only input state is $q_1$. 
States $q_1$ and $q_2$ are labeled with $0$, and state $q_3$ with $1$.
The Petri net for $\PP_1$ is shown in Figure \ref{figure-running-example}.
The initial marking for input $x$ puts $x$ tokens on $q_1$, and no token elsewhere.
If $x \geq 3$, then every fair firing sequence eventually reaches the deadlock marking with $x$ tokens in $q_3$ and no tokens elsewhere (indeed, transitions $t_3$ and $t_4$ ensure that after a token reaches $q_3$, eventually all other tokens move to $q_3$ as well). So the agents eventually reach consensus $1$. If $x < 3$, then no firing sequence ever puts a token in $q_3$ and so, since both $q_1$ and $q_2$ have output $0$, the agents reach consensus $0$.

The second protocol $\PP_2$ has place set $Q_2= \{ q_0,q_1,q_2,q_3 \}$, and transitions $(q_a, q_b)\mapsto(q_0,q_{\min(a+b,3)})$ for $0 < a,b < 3$, 
and ($q_a, q_3)\mapsto(q_3,q_3)$ for $0\leq a < 3$. 
The Petri net for $\PP_2$ is shown in Figure \ref{figure-second-protocol}.
Again, the only input state is $q_1$. States $q_0, q_1, q_2$ are labeled with $0$, and state $q_3$ is labeled with $1$. 
The reader can check that, as in the first protocol, the agents eventually reach consensus $1$ from an input $x$ if{}f $x \geq 3$.

Both these protocols could be generalized to calculate $[x \geq n]$ for any natural $n \geq 1$.
\end{example}

\subsubsection{Immediate observation protocols}

When two agents of a population protocol communicate, they can both simultaneously change their states. This corresponds to communication by \emph{rendez-vous}. In \cite{AAER07}, Angluin \textit{et al.} introduced \emph{immediate observation} protocols, corresponding to a more restricted communication mechanisms. One of the agents observes the state of the other agent, and updates its own state accordingly; the observed agent does not change its state, since it may not even know that it is being observed. Transitions are of the form $(q_s, q_o) \mapsto (q_d, q_o)$, where $q_o$ is the state of the observed agent. In the paper, they showed that the predicates computable by immediate observation protocols are exactly those described by counting constraints, a formalism introduced in Section \ref{CC}.

\begin{example}
Protocol $\PP_1$ of Example \ref{ex:twoPP} is immediate observation, but $\PP_2$ is not.
\end{example}

\subsubsection{Verifying population protocols} 
Not every population protocol is well designed. For some inputs $(n_1, \ldots, n_k)$ the protocol can have fair runs that never converge, or fair runs converging to the wrong value 
$1 -\varphi(n_1, \ldots, n_k)$. This raises the question of how to automatically verify that a protocol correctly computes a predicate. The main difficulty is to prove convergence to the right value \emph{for each of the infinitely many possible inputs}. In Petri net terms, we have to show that the net derived from the protocol satisfies a property \emph{for infinitely many initial markings}. So, strictly speaking, we have to show that an infinite collection of Petri nets satisfies a given property. We call problems of this kind \emph{parameterized}.

\section{Parameterized Analysis Problems}
\label{Parameterized}
Standard analysis problems for Petri nets concern one initial marking. For example,
the \emph{reachability problem} (\emph{coverability problem}) consists of, given a net $N$ and two markings $M, M'$ of $N$, deciding if $M$ is reachable (coverable) from $M'$.Parameterized problems, like the correctness problem for population protocols, involve an \emph{infinite} set of initial markings. In order to study their complexity, it is necessary to specify the shape of the set. For the applications to population protocols and chemical networks the following definition is adequate:

\begin{restatable}{definition}{DefCube}
\label{def:cube}
A set ${\cal M}$ of markings of a net $N$ is a \emph{cube} if there 
are mappings $L \colon P \rightarrow \mathbb{N}$ and $U \colon P \rightarrow \mathbb{N} \cup \{\infty\}$ such that $M \in {\cal M}$ if{}f $L \leq M \leq U$. We call $L$ and $U$ the \emph{lower} and \emph{upper bound} of ${\cal M}$, respectively, and use the notation $(L,U) \defeq {\cal M}$.
The cube-reachability (cube-coverability) problem consists of, given a net $N$ and cubes ${\cal M}, {\cal M}'$ of $N$,  deciding if there are markings $M \in {\cal M}, M' \in {\cal M}'$, such that $M$ is reachable (coverable) from $M'$. 
\end{restatable}

Observe that, if the set of places of the Petri net corresponding to a population protocol is $\{p_1, \ldots, p_n, p_{n+1}, \ldots, p_{n+m}\}$, where $p_1, \ldots, p_n$ are the initial places, then the set of input configurations corresponds to the cube $(L, U)$ where $L(p_i)=0$ for every $1 \leq i \leq n+m$, and $U(p_i) = \infty$ for $1 \leq i \leq n$ and $U(p_i) = 0$ for $n+1 \leq i \leq n+m$.

In general, parameterized problems are much harder than non-parameterized ones. 
Consider for example the class of conservative Petri nets which, by Fact \ref{fact:cons} contains all nets derived from population protocols. We have:

\begin{restatable}{theorem}{ThmConservativeNets}
For $\vec{1}$-conservative Petri nets:
\begin{itemize}
\item Reachability, coverability, and liveness are in \PSPACE.
\item Cube-reachability and cube-coverability are as hard as for general Petri nets, and so non-elementary and \EXPSPACE-hard, respectively. 
\end{itemize}
\end{restatable}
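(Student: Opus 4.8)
I would prove the two items by completely different means.

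\emph{The \PSPACE{} upper bound.} The crucial observation is that a $\vec{1}$-conservative net preserves the number of tokens: if $M \trans{*} M'$ then $|M| = |M'|$. Hence, from a fixed initial marking $M_0$ with $|M_0| = n$, every reachable marking puts at most $n$ tokens on each of the $|P|$ places, so it is representable in $O(|P|\log n)$ bits, which is polynomial in the size of the input, and the reachability graph has at most $(n+1)^{|P|}$ nodes. Reachability and coverability then become reachability queries in an exponentially large but succinctly presented graph, which a nondeterministic machine solves in polynomial space: it guesses a firing sequence one transition at a time, storing only the current marking and a binary step counter capped at $(n+1)^{|P|}$, and accepts when the current marking equals $M$ (reachability) or covers $M$ (coverability); by Savitch's theorem this is in \PSPACE. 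For liveness, I would use that the net is non-live iff there are a marking $M'$ reachable from $M_0$ and a transition $t$ such that no marking reachable from $M'$ enables $t$. The inner property --- ``$t$ is dead at $M'$'' --- is the complement of the coverability query ``$\preset{t}$ is coverable from $M'$'', hence itself in \PSPACE; guessing $M'$ and $t$, verifying reachability of $M'$ from $M_0$, and then calling this \PSPACE{} subroutine shows non-liveness, and therefore liveness, to be in \PSPACE.

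\emph{The lower bounds.} I would reduce ordinary reachability (resp.\ coverability) for arbitrary Petri nets --- ``is $M$ reachable (coverable) from $M'$ in $N$?'' --- to cube-reachability (resp.\ cube-coverability) of a $\vec{1}$-conservative net via the complementary-place construction. Given $N = (P,T,F)$, add one fresh place $r$ and, for every transition $t$ with token balance $\delta(t) \defeq |\postset{t}| - |\preset{t}|$, an arc from $r$ to $t$ of weight $\delta(t)$ when $\delta(t) > 0$ and an arc from $t$ to $r$ of weight $-\delta(t)$ when $\delta(t) < 0$; call the result $N'$. Every transition of $N'$ leaves the total number of tokens over $P \cup \{r\}$ unchanged, so $N'$ is $\vec{1}$-conservative, and its size is polynomial in that of $N$. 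Let ${\cal M}'$ be the cube fixing $M'$ on $P$ with $r$ unconstrained ($U(r) = \infty$), and ${\cal M}$ the cube fixing $M$ on $P$ (for coverability it suffices to set the lower bound to $M$ on $P$ and $0$ on $r$, since a marking covers an element of a cube $(L,U)$ iff it covers $L$). On the places of $P$ each transition of $N'$ acts exactly as in $N$, so erasing $r$ maps any run of $N'$ from a marking of ${\cal M}'$ to a run of $N$ from $M'$; conversely, any run $\sigma$ of $N$ from $M'$ is replayable in $N'$ once $r$ is given enough tokens, and the invariant $M''(r) + \sum_{p \in P} M''(p) = \text{const}$ shows that initializing $r$ above the largest token count occurring along $\sigma$ keeps $r$ non-empty throughout. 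Combining the two directions, $M$ is reachable (coverable) from $M'$ in $N$ iff ${\cal M}$ is cube-reachable (cube-coverable) from ${\cal M}'$ in $N'$, so these cube problems inherit the non-elementary lower bound of Petri net reachability \cite{CLLLM18} and the \EXPSPACE-hardness of Petri net coverability.

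\emph{Main obstacle.} None of the steps is deep; the two points needing care are (i) in the upper bound, arguing that ``$t$ is dead at $M'$'' really is the complement of a coverability query, so that nesting it inside the outer nondeterministic guess stays within \PSPACE, and (ii) in the reduction, checking that the complementary place creates no spurious runs --- tokens enter $r$ only from token-destroying transitions and leave it only into token-creating ones --- while picking a finite yet sufficiently large initial content of $r$ to faithfully simulate any given run of $N$.
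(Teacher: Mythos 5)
Your proposal is correct and matches the paper's proof in essence: the paper likewise dispatches the \PSPACE{} upper bounds by the standard bounded-marking argument (citing \cite{ChengEP95,Esparza96}), and proves the lower bounds by padding an arbitrary net with auxiliary places so that it becomes $\vec{1}$-conservative, then using cubes with upper bound $\infty$ on the auxiliary places to supply arbitrarily many tokens. The only cosmetic difference is that the paper uses two fresh places (a repository feeding token-creating transitions and a sink absorbing the surplus of token-destroying ones) where you use a single complementary place for both roles; your projection-and-replay argument shows this variant is equally sound.
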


In the rest of the paper we introduce immediate observation Petri nets, the class of Petri nets corresponding to immediate observation protocols and enzymatic reaction networks, and study the cube-reachability, coverability, and liveness problems. We prove that, while the problems are \PSPACE-hard even for single markings, their cube versions \emph{remain} \PSPACE. This pinpoints the essential property of the class: loosely speaking, deciding standard problems for infinitely many markings is not harder than deciding them for one marking.

\section{Immediate Observation Petri Nets}
We introduce the class of immediate observation Petri nets (IO nets) and then show that the reachability, coverability, and liveness problems are \PSPACE-hard for this class.

\begin{definition}
\label{def:IOnet}
A transition $t$ of a Petri net is an \emph{immediate observation transition} if there are three 
places $p_s,p_d, p_o$, not necessarily distinct, such that 
$\preset{t}=\multiset{p_s,p_o}$ and $\postset{t}=\multiset{p_d,p_o}$.
We call $p_s, p_d, p_o$ the \emph{source}, \emph{destination}, and \emph{observed} places of $t$, respectively. A Petri net is an \emph{immediate observation net} if and only if all its transitions are immediate observation transitions.
\end{definition}

Following the useful convention of population protocols, we  write  $t = (p_s,p_o)\mapsto (p_d,p_o)$.


\begin{example}
The Petri net illustrated in Figure \ref{figure-running-example} is an immediate observation Petri net.
\end{example}


\label{IO-nets-hardness}
We show that the standard simulation of bounded-tape Turing machines 
by  1-safe Petri nets, as described for example in \cite{ChengEP95,Esparza96},
can be modified to produce an IO net (actually, a 1-safe IO net). Using this result, we 
can then easily prove that the reachability, coverability, and liveness
problems are \PSPACE-hard. Since a set consisting of a single marking is a special case of a cube,
the result carries over to the cube-versions of the problems.

We fix a deterministic Turing machine $M$ with set of control states $Q$, alphabet $\Sigma$ containing the empty symbol $\text{\textvisiblespace}$, and partial transition function $\delta \colon  Q\times\Sigma \to Q\times\Sigma\times D$ ($D=\{-1,+1\}$). We let $K$ denote an upper bound on the number of tape cells visited by the computation of $M$ on empty tape. The \emph{implementation} of $M$ is the IO Petri net $N_M$ described below.
        
        \newcommand{\pass}[2]{\textit{off}[#1,#2]}
        \newcommand{\act}[2]{\textit{on}[#1,#2]}
        \newcommand{\stable}[2]{\textit{at}[#1,#2]}
        \newcommand{\switch}[4]{\textit{move}[#1,#2,#3,#4]}

        \noindent \textbf{Places of $N_M$.} The net  $N_M$ contains two sets of \emph{cell places}
        and \textit{head places} modelling the state of the tape cells and the head, respectively. The cell places are:
        \begin{itemize}
        \item $\pass{\sigma}{n}$ for each $\sigma\in \Sigma$ and $1\leq n\leq K$. 
        A token on $\pass{\sigma}{n}$ denotes that cell $n$ contains symbol $\sigma$, and the cell is ``off'', i.e., the head is not on it.
        \item $\act{\sigma}{n}$ for each $\sigma\in \Sigma$ and $1\leq n\leq K$, with analogous intended meaning.
        \end{itemize}
        The head places are:
        \begin{itemize}
        \item $\stable{q}{n}$ for each $q\in Q$ and $1\leq n\leq K$.  A token on $\stable{q}{n}$ denotes that the head is in control state $q$ and at cell $n$.
        \item $\switch{q}{\sigma}{n}{d}$ for each $q\in Q$, $\sigma\in\Sigma$, 
        $1\leq n\leq K$
        and every $d\in D$ such that $1\leq n+d\leq K$. 
        A token on $\switch{q}{\sigma}{n}{d}$ denotes that head is in control state $q$, has left cell $n$ after writing symbol $\sigma$ on it, and is currently moving in the direction given by $d$. 
        \end{itemize}

       \noindent \textbf{Transitions of $N_M$.} Intuitively, the implementation of $M$ contains a  set of \emph{cell transitions} in which a cell observes the head and changes its state, and a set of \emph{head transitions} in which the head observes a cell. Further, each of these sets contains transitions of two types. The set of cell transitions contains:
        \begin{itemize}
        \item Type  \textbf{1a}: $(\pass{\sigma}{n} \ , \ \stable{q}{n}) \mapsto (\act{\sigma}{n} \ , \ \stable{q}{n})$ for every state $q \in Q$, symbol $\sigma \in \Sigma$, and cell $1 \leq n \leq K$. \\
        The $n$-th cell, currently \textit{off}, observes that the head is on it, and switches itself \textit{on}.
        \item Type \textbf{1b}: $(\act{\sigma}{n} \ , \  \switch{q}{\sigma'}{n}{d}) \mapsto ( \pass{\sigma'}{n} \ , \ \switch{q}{\sigma'}{n}{d})$ for every $q \in Q$, $\sigma \in \Sigma$, and $1 \leq n \leq K$ such that $1 \leq n+d \leq K$. \\
        The $n$-th cell, currently \textit{on}, observes that the head has left after writing $\sigma'$, and switches itself \textit{off} (accepting the character the head intended to write).
        \end{itemize}
        The set of head transitions contains:
        \begin{itemize}
        \item Type \textbf{2a}: $(\stable{q}{n}  \ , \ \act{\sigma}{n}) \mapsto (\switch{\delta_Q(q,\sigma)}{\delta_\Sigma(q,\sigma)}{n}{\delta_D(q,\sigma)} \ , \ \act{\sigma}{n})$ for every $q \in Q$, $\sigma \in \Sigma$, and $1 \leq n \leq K$ such that $1\leq n+\delta_D(q,\sigma)\leq K$. \\
        The head, currently on cell $n$, observes that the cell is \textit{on}, writes the new symbol on it, and leaves.
        \item Type \textbf{2b}: $(\switch{q}{\sigma}{n}{d}  \ , \ \pass{\sigma}{n}) \mapsto (\stable{q}{n+d} \ , \ \pass{\sigma}{n})$ for every $q \in Q$, $\sigma \in \Sigma$, and $1 \leq n \leq K$ such that $1 \leq n+d \leq K$.\\
        The head, currently moving, observes that the old cell has turned \textit{off}, and places itself on the new cell.
        \end{itemize}

This concludes the definition of $N_M$. 
In Theorem \ref{thm:simul} below we formalize the relation between the Turing machine $M$ and its implementation $N_M$, using the following definition. 
        
\begin{definition}
Given a configuration $c$ of $M$ with control state $q$, tape content $\sigma_1\sigma_2\cdots \sigma_K$, and head on cell $n \leq K$, we denote $M_c$ the marking that puts a token in $\pass{\sigma_i}{i}$ for each $1\leq i\leq K$, a token in $\stable{q}{n}$, and no tokens elsewhere.
\end{definition}

Now we state our simulation theorem and hardness result.

\begin{restatable}{theorem}{theoremSimulationStep}
\label{thm:simul}
For every two configurations $c, c'$ of $M$ that write at most $K$ cells:  $c \trans{} c'$ if{}f $M_c \trans{t_1t_2t_3t_4} M_{c'}$ in $N_M$ for some transitions $t_1, t_2, t_3, t_4$ of types \textbf{1a}, \textbf{2a}, \textbf{1b}, \textbf{2b}, respectively.
\end{restatable}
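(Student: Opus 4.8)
The plan is to prove the two directions of the iff separately, and in each direction to simply follow the four-step firing sequence through the net, checking at each step that the unique enabled transition of the required type does what the head/cell encoding demands.

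\textbf{Forward direction ($c \trans{} c'$ implies $M_c \trans{t_1 t_2 t_3 t_4} M_{c'}$).} Suppose $c$ has control state $q$, tape content $\sigma_1 \cdots \sigma_K$, head on cell $n$, and $\delta(q,\sigma_n) = (q', \sigma', d)$, so that $c'$ has control state $q'$, tape content obtained by replacing $\sigma_n$ with $\sigma'$, and head on cell $n+d$ (which is $\le K$ by the at-most-$K$-cells hypothesis). Starting from $M_c$, which has tokens on $\pass{\sigma_i}{i}$ for all $i$ and on $\stable{q}{n}$: first fire the type-\textbf{1a} transition $t_1 = (\pass{\sigma_n}{n}, \stable{q}{n}) \mapsto (\act{\sigma_n}{n}, \stable{q}{n})$ — enabled since both $\pass{\sigma_n}{n}$ and $\stable{q}{n}$ are marked — reaching a marking with $\act{\sigma_n}{n}$ and $\stable{q}{n}$ marked. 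Then fire the type-\textbf{2a} transition $t_2 = (\stable{q}{n}, \act{\sigma_n}{n}) \mapsto (\switch{q'}{\sigma'}{n}{d}, \act{\sigma_n}{n})$, using $\delta_Q(q,\sigma_n)=q'$ etc.; it is enabled and the side condition $1 \le n+d \le K$ holds. Then fire the type-\textbf{1b} transition $t_3 = (\act{\sigma_n}{n}, \switch{q'}{\sigma'}{n}{d}) \mapsto (\pass{\sigma'}{n}, \switch{q'}{\sigma'}{n}{d})$, which replaces the cell content by $\sigma'$. Finally fire the type-\textbf{2b} transition $t_4 = (\switch{q'}{\sigma'}{n}{d}, \pass{\sigma'}{n}) \mapsto (\stable{q'}{n+d}, \pass{\sigma'}{n})$. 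A direct bookkeeping of the token moves shows the resulting marking has a token on $\pass{\sigma'}{n}$, tokens on $\pass{\sigma_i}{i}$ for $i \ne n$ (never touched), a token on $\stable{q'}{n+d}$, and nothing else — i.e., exactly $M_{c'}$.

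\textbf{Backward direction.} Here I would argue that the four-step sequence is essentially forced once we fix the types. From $M_c$, the only type-\textbf{1a} transition enabled is one of the form $(\pass{\sigma}{m}, \stable{q''}{m}) \mapsto \cdots$ with both places marked; since the only head place marked in $M_c$ is $\stable{q}{n}$, we must have $m=n$, $q''=q$, and $\sigma = \sigma_n$. So $t_1$ is determined. Continuing, after $t_1$ the only type-\textbf{2a} transition enabled involves $\stable{q}{n}$ and $\act{\sigma_n}{n}$, forcing $t_2$ to use $\delta(q,\sigma_n)$; the side condition being satisfied means $1 \le n + \delta_D(q,\sigma_n) \le K$. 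Similarly $t_3$ and $t_4$ are forced, and $M_{c'}$ having a token on $\stable{q'}{n+d}$ and on $\pass{\sigma'}{n}$ identifies the successor configuration as exactly the one given by applying $\delta$, so $c \trans{} c'$. One should also note the degenerate cases where some of $p_s, p_d, p_o$ coincide (e.g. $\sigma = \sigma'$, or $n = n+d$ which cannot happen since $d \ne 0$) do not cause problems, since the multiset arithmetic still works out.

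\textbf{Main obstacle.} The routine part is the token bookkeeping; the one genuinely delicate point is the backward direction's claim that the intermediate transitions are \emph{uniquely} determined by their prescribed types and the reachable markings — in particular one must check that no ``spurious'' type-\textbf{1a}, \textbf{2a}, \textbf{1b}, \textbf{2b} transition becomes enabled at the intermediate markings that would allow a sequence $t_1 t_2 t_3 t_4$ reaching $M_{c'}$ by a different route. Because $M_c$ and each intermediate marking has at most one head-related token and the cell tokens stay put except at cell $n$, this uniqueness goes through, but it is the step that needs the most careful case analysis. The side-condition hypothesis ``$c, c'$ write at most $K$ cells'' is exactly what guarantees $1 \le n+d \le K$ so that the relevant transitions exist in $N_M$ at all.
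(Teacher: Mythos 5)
Your proof is correct and follows essentially the same route as the paper: a direct forward construction of the four-step firing sequence, plus a determinism argument for the converse showing that each intermediate marking enables a unique transition of the prescribed type, so the sequence can only land on the marking of the $\delta$-successor. The only organizational difference is that the paper factors the uniqueness and deadlock analysis through a general invariant (``modelling markings'' enable at most one transition and are closed under firing), whereas you verify it step by step for the specific markings reached from $M_c$; both are sound.
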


\begin{restatable}{theorem}{ThmReachabilityHard}\label{thm:liveness-hard}
The reachability, coverability and liveness problems for  IO nets are \PSPACE-hard.
\end{restatable}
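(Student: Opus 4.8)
The plan is to reduce from a standard \PSPACE-complete problem about deterministic Turing machines, namely the question whether a fixed machine $M$ accepts on empty input while using at most $K$ tape cells, where $K$ is polynomial in the size of $M$ (or, if one prefers, to reduce from the acceptance problem for a linear-bounded automaton). By padding, acceptance of $M$ on empty tape within space $K$ is \PSPACE-hard, and the net $N_M$ constructed above has size polynomial in $|M|$ and $K$, so it can be produced by a logspace (indeed polynomial-time) reduction.

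First I would fix $M$ and $K$ and build $N_M$ as defined. Let $c_0$ be the initial configuration of $M$ (control state $q_0$, blank tape of length $K$, head on cell $1$), and let $M_{c_0}$ be the corresponding marking given by the definition preceding Theorem~\ref{thm:simul}. Since $M$ is deterministic and either halts within space $K$ or not, its space-bounded computation is a finite or eventually-looping sequence of configurations; I may assume without loss of generality that $M$, upon reaching an accepting control state $q_f$, first erases its tape and moves the head to cell $1$, so that there is a unique canonical accepting configuration $c_f$ whose marking $M_{c_f}$ is a fixed, easily computed marking. Iterating Theorem~\ref{thm:simul} gives $c_0 \trans{*} c_f$ in $M$ iff $M_{c_0} \trans{*} M_{c_f}$ in $N_M$: the forward direction is immediate by concatenating the length-four firing sequences, and the backward direction uses that these length-four blocks are the only way to move between ``clean'' configuration markings, together with determinism of $M$, so any firing sequence from $M_{c_0}$ to $M_{c_f}$ decomposes into such blocks. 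This establishes \PSPACE-hardness of reachability for IO nets.

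For coverability, I would observe that $M_{c_f}$ is a marking of size equal to $|M_{c_0}|$ (the net is $1$-safe along reachable markings, and in fact $\vec 1$-conservative on the relevant component), so covering $M_{c_f}$ from $M_{c_0}$ is equivalent to reaching it; hence coverability is \PSPACE-hard as well. Alternatively, and more robustly, I would pick a single place that receives a token only in the accepting configuration (or add, as part of the reduction, a fresh IO transition that is enabled only at $M_{c_f}$ and deposits a token on a new ``flag'' place by observing a witness place), and ask whether that one-token marking is coverable; this isolates the acceptance event and avoids any delicate argument about token counts.

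For liveness, I would modify $N_M$ slightly so that it is live from $M_{c_0}$ precisely when $M$ accepts. A standard trick: add a fresh place $r$ carrying one token initially, an IO transition that becomes enabled at (a state coverable from) $M_{c_f}$ and which, once fired, resets the simulation back to $M_{c_0}$ (rebuilding all cell and head tokens by observing $r$), so that if $M$ accepts the net can cycle forever through all its transitions, whereas if $M$ does not accept the simulation deadlocks and no transition fires infinitely often. One must check that the added transitions are genuine immediate observation transitions and that liveness of the augmented net is equivalent to acceptance; getting this gadget to be expressible with only $(p_s,p_o)\mapsto(p_d,p_o)$ transitions while still faithfully rebuilding the whole configuration marking is the main obstacle, since a reset must move potentially many tokens and each IO transition moves only one. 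The fix is to let the reset proceed token-by-token, each step observing a ``reset-in-progress'' marker, and to verify that no partial reset can get stuck; I expect this bookkeeping, rather than any conceptual difficulty, to be the fiddly part of the proof. (For coverability and reachability no such gadget is needed, so those two are clean once Theorem~\ref{thm:simul} is in hand.)
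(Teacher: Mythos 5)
Your reductions for reachability and coverability essentially coincide with the paper's: reduce from acceptance of a deterministic polynomially-space-bounded machine normalized to halt in a unique accepting configuration, use Theorem~\ref{thm:simul} together with determinism (the paper's ``modelling marking'' lemma) to show that reachability of $M_{c_f}$ from $M_{c_0}$ is equivalent to acceptance, and for coverability either exploit $\vec{1}$-conservativeness or ask to cover a single token on the accepting head place; both of your variants work, and the second is exactly what the paper does.

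The liveness reduction, however, has a genuine gap. Liveness demands that \emph{every} transition of the net be enablable from \emph{every} reachable marking, and $N_M$ contains transitions for every combination of control state, symbol, cell and direction --- including many combinations that the deterministic computation of $M$ on empty tape never realizes. A reset gadget that merely returns the net to $M_{c_0}$ and lets it replay the (unique) accepting run forever only ever enables the transitions occurring in that run; all other transitions of $N_M$ remain permanently disabled, so the augmented net is \emph{not} live even when $M$ accepts. (You cannot repair this by deleting the unused transitions, since identifying them amounts to running the machine.) The paper avoids both this problem and the token-by-token reset bookkeeping you anticipate with a much simpler gadget: an $observer$ token moves to a fresh place $success$ by observing the accepting head place, and the net additionally has all transitions $(p,success)\mapsto(p',success)$. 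Once $success$ is marked it can never be emptied (moving a token out of $success$ requires observing another token still in it), and these transitions let any token be moved to any place; hence from any marking reachable after acceptance one can assemble the preset of any transition whatsoever, while from any marking before acceptance the deterministic simulation inevitably leads to such a marking. If $M$ does not accept, no transition into $success$ is ever enabled, so the net is not live. You should replace the reset idea with such a ``universal enabler'' trap.
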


\section{The Pruning Theorem}
\label{Pruning}
In this section, we present the fundamental property of immediate observation nets that entails most of the results in this paper: the Pruning Theorem. 

The Pruning Theorem intuitively states that if $M$ is coverable from a marking $M''$, then it is also coverable from a ``small" marking $S'' \leq M''$, where ``small" means $|S''| \leq |M| + |P|^3$. 
We state the theorem below, and then build up to its proof which is presented in Section \ref{subsec-proof}.

\begin{theorem} [Pruning Theorem]
\label{thm:pruning}
Let $N = (P,T,F)$ be an IO net, let $M$ be a marking of $N$, and let
$M'' \trans{*} M'$ be a firing sequence of $N$ such that $M' \geq M$.
There exist markings $S''$ and $S'$ such that 
\begin{center}
\(
\begin{array}[b]{@{}c@{}c@{}c@{}c@{}c@{}c@{}c@{}}
M'' &  \trans{\hspace{1em}*\hspace{1em}} & M' & \  \geq \ &M  \\[0.1cm]
\geq &  & \geq \\[0.1cm]
S'' & \trans{\hspace{1em}*\hspace{1em}} &S'  & \geq  & M
\end{array}
\)
\end{center}
and $|S''| \leq |M| + |P|^3$.
\end{theorem}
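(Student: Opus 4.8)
The plan is to reason about the firing sequence $M'' \trans{\sigma} M'$ "token by token," tracking the trajectory of each individual token. Since $N$ is an IO net, every transition consumes one token from a source place and one from an observed place, and returns a token to the observed place and one to a destination place; so along $\sigma$ we may think of the observed token as staying put and of the consumed token as "moving" from its source to the destination. This gives a well-defined notion of a token-trajectory: each of the $|M''|$ tokens in $M''$ follows a path through $P$, and each of the $|M'|$ tokens in $M'$ arrived via some such path. The key combinatorial observation is that if many tokens follow "parallel" trajectories — i.e., sit in the same place at every step of $\sigma$ — then we can delete all but a bounded number of them and still replay $\sigma$ (possibly dropping the firings that needed the deleted tokens as sources), because a transition only ever needs \emph{one} token in its observed place to fire, and the observed place always retains its token. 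So redundancy among trajectories is the resource we exploit.

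First I would make precise the notion of a \emph{history}: a function assigning to each time step $0 \le i \le |\sigma|$ a place, describing where a given token is at that moment, and show that $\sigma$ decomposes into a multiset of at most $|M''|$ such histories, with the multiset of time-$0$ values being $M''$ and the multiset of final values being $M'$. Next I would identify, among the $|M'| \ge |M|$ tokens present in $M'$, a set of "witness" tokens: at most $M(p)$ of them for each place $p$, enough to cover $M$. Then I would argue that we only need to keep: (a) the histories of the witness tokens; (b) for each transition fired in $\sigma$, one history that occupies the observed place of that transition at the right time (an "observer" history); and (c) recursively, observer histories needed to justify the source/observer tokens of histories already kept. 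The crucial point is to bound the number of histories this closure needs. A clean way is to process $\sigma$ and assign to each kept history a "reason"; the number of distinct reasons is controlled by the number of places and the structure of the net, and one shows the closure stabilizes with at most $|P|^3$ extra histories beyond the $|M|$ witness tokens — matching the bound $|S''| \le |M| + |P|^3$. Here $S''$ is the multiset of time-$0$ values of the kept histories, and $S'$ its multiset of final values; by construction $S'' \le M''$, $S' \le M'$, $S' \ge M$, and the sub-sequence of $\sigma$ retaining only firings all of whose required tokens are kept is a valid firing sequence $S'' \trans{*} S'$.

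The main obstacle, and the step that needs real care, is proving the $|P|^3$ bound on the closure — i.e., that demanding observer histories does not cascade uncontrollably. The naive fear is that keeping an observer for one transition forces keeping an observer for an earlier transition, and so on, blowing up linearly in $|\sigma|$. The resolution is that an observer history, once created, can be "reused" for all transitions observing that same place over a contiguous span of time, and a careful accounting of how many genuinely new histories must be introduced — organized by (source place, destination place, observed place) or by segmenting the timeline at the points where the set of occupied places changes — yields a cubic-in-$|P|$ bound independent of $|\sigma|$. I expect the bookkeeping for this saturation argument (choosing observers greedily, latest-possible or earliest-possible, and showing reuse) to be where the bulk of the technical work lies; everything else is a matter of setting up the token-trajectory formalism cleanly and checking that the retained sub-sequence remains fireable.
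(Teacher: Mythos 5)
Your setup---tracking individual tokens as trajectories, selecting witness tokens in $M'$ that cover $M$, and reducing the problem to supplying enough ``observer'' tokens for the retained non-horizontal steps---matches the paper's, and you correctly identify the crux: bounding the observer demand without it cascading. But the mechanism you propose for that step does not work as stated, and the idea that actually closes the gap is missing. Your plan keeps a \emph{sub-multiset of the original trajectories} and hopes that one kept observer can be ``reused for all transitions observing that same place over a contiguous span of time.'' That reuse is only valid if the kept token actually sits in the observed place throughout the span; in the original run the place $p_o$ may be occupied at the various observation times by \emph{different} tokens (the paper explicitly notes that the first and last occurrence of a place can lie in different trajectories), each passing through $p_o$ only briefly. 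Keeping one original observer per observation time is unbounded in $|\sigma|$, and each kept observer has its own non-horizontal steps demanding further observers, so your recursive closure (c) can genuinely cascade. Neither fallback accounting repairs this: the triple (source, destination, observed) of a transition says nothing about how many distinct tokens successively occupy the observed place, and the timeline can be cut at unboundedly many points, since occupancy of a place can oscillate arbitrarily often.

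The paper's resolution is to \emph{construct new trajectories by splicing fragments of old ones} rather than selecting a subset. The non-witness trajectories are partitioned into $|P|^2$ bunches by (initial place, final place); a bunch $B$ of size greater than $|P|$ is replaced by at most $|P|$ new trajectories $\tau_p$, one for each place $p$ visited by $B$: $\tau_p$ follows fragments of $B$'s trajectories to reach $p$ by the \emph{first} moment $f(p)$ at which any trajectory of $B$ uses $p$, waits there until the \emph{last} moment $l(p)$, then leaves along another fragment. Because $\tau_p$ occupies $p$ during all of $[f(p),l(p)]$, it serves every observation of $p$ that the old bunch served; and because every non-horizontal step of every $\tau_p$ is a copy of a step already present in the old history at the same position, the new trajectories create no new observation demands---there is no recursion, only a one-shot replacement checked for self-consistency. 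This gives $|S''| \le |M| + |P|^2\cdot|P|$. Without the splicing idea, your plan as written cannot establish a bound independent of the length of the firing sequence.
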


It is easy to see that for $M' = M$ the Pruning Theorem holds, because since $N$ is conservative, $|M''|=|M|$ and we can choose $S''=M''$.
It is also not difficult to find a non-IO net for which the theorem does not hold. 

\begin{example}
Consider our IO net represented in Figure \ref{figure-running-example}.
There is a firing sequence $(30,0,1) \trans{*} (0,0,31)$ where $(0,0,31)$ covers $(0,0,2)$. 
By application of the Pruning Theorem, we obtain a firing sequence $(3,0,1) \trans{*} (0,0,4) \geq (0,0,2)$ where $|(3,0,1)|=4$ is smaller than $|(0,0,2)|+3^3=29$.
\end{example}

\begin{example}[A non-IO net for which the theorem does not hold.] 
To see that the IO condition cannot be replaced with conservativeness of the network, consider the net with $4$ places $q_1,q_2,q_3,q_4$ and a single transition $(q_1,q_2)\mapsto(q_3,q_4)$.
There is a firing sequence $(1000,1000,0,0) \trans{*} (0,0,1000,1000) \geq (0,0,100,0)$.
But to cover $(0,0,100,0)$ from a marking below $(1000,1000,0,0)$ we need to fire the transition at least $100$ times.
This requires a marking with $200 > 100 + 4^3$ tokens.
\end{example}

\subsection{Trajectories, Histories, Realizability}

Since the transitions of IO nets do not create or destroy tokens, we can give tokens identities.
Given a firing sequence, each token of the initial marking follows a \emph{trajectory}, or sequence of \emph{steps}, through the places of the net until it reaches the final marking of the sequence.

\begin{definition}
A \emph{trajectory} is a sequence $\tau =p_1 \ldots p_n$ of places. We denote $\tau(i)$ the $j$-th place of $\tau$. The \emph{$j$-th step} of $\tau$ is the pair $\tau(i)\tau(i+1)$ of adjacent places.

A \emph{history} is a multiset of trajectories of the same length. The length of a history is the common length of its trajectories. Given a history $H$ of length $n$ and index $1 \leq i \leq n$, \emph{the $j$-th marking of $H$}, denoted $M_{H}^i$, is defined as follows: for every place $p$, $M_{H}^i(p)$ is the number of trajectories $\tau \in H$ such that $\tau(i)=p$. The markings 
$M_{H}^1$ and $M_{H}^n$ are called the \emph{initial} and \emph{final} markings of $H$.

A history $H$ of length $n\geq 1$ is \emph{realizable} in an IO net $N$ if there exist transitions of $N$ $t_1, \ldots, t_{n-1}$ 
and numbers $k_1, \ldots, k_{n-1} \geq 0$ such that $M_{H}^1 \trans{t_1^{k_1}}M_{H}^2 \cdots  M_{H}^{n-1} \trans{t_{n-1}^{k_{n-1}}} M_{H}^n$.
Notice that a history of length $1$ is always realizable.
\end{definition}

\begin{remark}
Notice that there may be more than one realizable history corresponding to a firing sequence in an IO net, because the firing sequence does not keep track of which token goes where, while the history does.
\end{remark}

\vspace{-0.5cm}
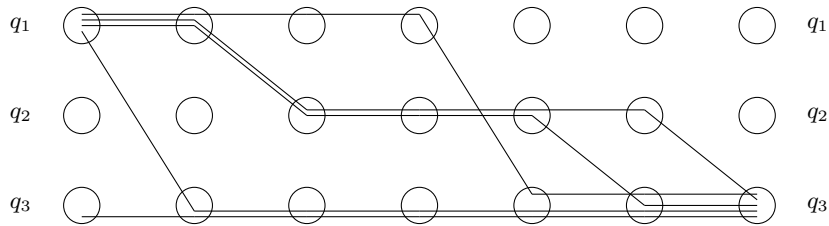
\begin{figure}
\centering
\begin{tikzpicture}
  \node[circle, draw]
    (Node11)
    {~~~}; 
    \node[right=1 of Node11, circle, draw]
    (Node12)
    {~~~};
    \node[right=1 of Node12, circle, draw]
    (Node13)
    {~~~};
    \node[right=1 of Node13, circle, draw]
    (Node14)
    {~~~};
    \node[right=1 of Node14, circle, draw]
    (Node15)
    {~~~};
    \node[right=1 of Node15, circle, draw]
    (Node16)
    {~~~};
    \node[right=1 of Node16, circle, draw]
    (Node17)
    {~~~};
    \node[below=0.7 of Node11, circle, draw]
    (Node21)
    {~~~};
    \node[right=1 of Node21, circle, draw]
    (Node22)
    {~~~};
    \node[right=1 of Node22, circle, draw]
    (Node23)
    {~~~};
    \node[right=1 of Node23, circle, draw]
    (Node24)
    {~~~};
    \node[right=1 of Node24, circle, draw]
    (Node25)
    {~~~};
    \node[right=1 of Node25, circle, draw]
    (Node26)
    {~~~};
    \node[right=1 of Node26, circle, draw]
    (Node27)
    {~~~};
    \node[below=0.7 of Node21, circle, draw]
    (Node31)
    {~~~};
    \node[right=1 of Node31, circle, draw]
    (Node32)
    {~~~};
    \node[right=1 of Node32, circle, draw]
    (Node33)
    {~~~}; 
    \node[right=1 of Node33, circle, draw]
    (Node34)
    {~~~};
    \node[right=1 of Node34, circle, draw]
    (Node35)
    {~~~};
    \node[right=1 of Node35, circle, draw]
    (Node36)
    {~~~};
    \node[right=1 of Node36, circle, draw]
    (Node37)
    {~~~};
        \node[left=0.3 of Node11]
        (Node11l)
        {$q_1$};
        \node[right=0.3 of Node17]
        (Node17r)
        {$q_1$};
        \node[left=0.3 of Node21]
        (Node21l)
        {$q_2$};
        \node[right=0.3 of Node27]
        (Node27r)
        {$q_2$};
        \node[left=0.3 of Node31]
        (Node31l)
        {$q_3$};
        \node[right=0.3 of Node37]
        (Node37r)
        {$q_3$};
    
  \draw
  ($(Node11.center)+(0,1.50mm)$)
    --
  ($(Node12.center)+(0,1.50mm)$)
  ;
  \draw
  ($(Node12.center)+(0,1.50mm)$)
    --
  ($(Node13.center)+(0,1.50mm)$)
  ;
  \draw($(Node13.center)+(0,1.50mm)$)
    --
  ($(Node14.center)+(0,1.50mm)$)
  ;
  \draw
  ($(Node14.center)+(0,1.50mm)$)
    --
  ($(Node35.center)+(0,1.50mm)$)
  ;
  \draw
  ($(Node35.center)+(0,1.50mm)$)
    --
  ($(Node36.center)+(0,1.50mm)$)
  ;
  \draw
  ($(Node36.center)+(0,1.50mm)$)
    --
  ($(Node37.center)+(0,1.50mm)$)
  ;
  \draw
  ($(Node11.center)+(0,0.0mm)$)
    --
  ($(Node12.center)+(0,0.0mm)$)
  ;
  \draw
  ($(Node12.center)+(0,0.0mm)$)
    --
  ($(Node23.center)+(0,0.0mm)$)
  ;
  \draw
  ($(Node23.center)+(0,0.0mm)$)
    --
  ($(Node24.center)+(0,0.0mm)$)
  ;
  \draw
  ($(Node24.center)+(0,0.0mm)$)
    --
  ($(Node25.center)+(0,0.0mm)$)
  ;
  \draw
  ($(Node25.center)+(0,0.0mm)$)
    --
  ($(Node36.center)+(0,0.0mm)$)
  ;
  \draw
  ($(Node36.center)+(0,0.0mm)$)
    --
  ($(Node37.center)+(0,0.0mm)$)
  ;
  \draw
  ($(Node11.center)+(0,0.75mm)$)
    --
  ($(Node12.center)+(0,0.75mm)$)
  ;
  \draw
  ($(Node12.center)+(0,0.75mm)$)
    --
  ($(Node23.center)+(0,0.75mm)$)
  ;
  \draw
  ($(Node23.center)+(0,0.75mm)$)
    --
  ($(Node24.center)+(0,0.75mm)$)
  ;
  \draw
  ($(Node24.center)+(0,0.75mm)$)
    --
  ($(Node25.center)+(0,0.75mm)$)
  ;
  \draw
  ($(Node25.center)+(0,0.75mm)$)
    --
  ($(Node26.center)+(0,0.75mm)$)
  ;
  \draw
  ($(Node26.center)+(0,0.75mm)$)
    --
  ($(Node37.center)+(0,0.75mm)$)
  ;
  \draw
  ($(Node11.center)+(0,-0.75mm)$)
    --
  ($(Node32.center)+(0,-0.75mm)$)
  ;
  \draw
  ($(Node32.center)+(0,-0.75mm)$)
    --
  ($(Node33.center)+(0,-0.75mm)$)
  ;
  \draw
  ($(Node33.center)+(0,-0.75mm)$)
    --
  ($(Node34.center)+(0,-0.75mm)$)
  ;
  \draw
  ($(Node34.center)+(0,-0.75mm)$)
    --
  ($(Node35.center)+(0,-0.75mm)$)
  ;
  \draw
  ($(Node35.center)+(0,-0.75mm)$)
    --
  ($(Node36.center)+(0,-0.75mm)$)
  ;
  \draw
  ($(Node36.center)+(0,-0.75mm)$)
    --
  ($(Node37.center)+(0,-0.75mm)$)
  ;
  \draw
  ($(Node31.center)+(0,-1.50mm)$)
    --
  ($(Node32.center)+(0,-1.50mm)$)
  ;
  \draw
  ($(Node32.center)+(0,-1.50mm)$)
    --
  ($(Node33.center)+(0,-1.50mm)$)
  ;
  \draw
  ($(Node33.center)+(0,-1.50mm)$)
    --
  ($(Node34.center)+(0,-1.50mm)$)
  ;
  \draw
  ($(Node34.center)+(0,-1.50mm)$)
    --
  ($(Node35.center)+(0,-1.50mm)$)
  ;
  \draw
  ($(Node35.center)+(0,-1.50mm)$)
    --
  ($(Node36.center)+(0,-1.50mm)$)
  ;
  \draw
  ($(Node36.center)+(0,-1.50mm)$)
    --
  ($(Node37.center)+(0,-1.50mm)$)
  ;

\end{tikzpicture}
\caption{ Realizable history in our IO net with three states.}
\label{figure-history}
\end{figure}

\vspace{-0.5cm}
\begin{example}
\label{history-ex}
Histories can be graphically represented. 
Consider Figure \ref{figure-history} which illustrates a history $H$ of length $7$.
It consists of five trajectories: one trajectory from $q_3$ to $q_3$ passing only through $q_3$, and four trajectories from $q_1$ to $q_3$ which follow different place sequences.
$H$'s first marking is $M_{H}^1= (4,0,1)$ and $H$'s seventh and last marking is $M_{H}^7=(0,0,5)$.
History $H$ is realizable in the IO net N of Figure \ref{figure-running-example} 
which has place set $\{ q_1,q_2,q_3 \}$ and transitions $t_1 = (q_1, q_1)\mapsto(q_2,q_1), t_2 = (q_2,q_2)\mapsto(q_3,q_2),t_3 = (q_1,q_3)\mapsto(q_3,q_3)$ and $t_4 = (q_2,q_3)\mapsto(q_3,q_3)$.
Indeed $M_{H}^1 \trans{t_3 t_1^2 t_3 t_2 t_4 } M_{H}^7$.
\end{example}

We define a class of histories sufficient for describing all the firing sequences for IO nets.

\begin{definition}
A step $\tau(i)\tau(i+1)$ of a trajectory $\tau$ is \emph{horizontal} if $\tau(i) = \tau(i+1)$, and \emph{non-horizontal} otherwise.

A history $H$ of length $n$ is \emph{well-structured} if for every $1 \leq i \leq n-1$ one of the two following conditions hold:
\begin{itemize}
\item For every trajectory $\tau \in H$, the $j$-th step of $\tau$ is horizontal.
\item For every two trajectories $\tau_1, \tau_2 \in H$, if the $j$-th steps of $\tau_1$ and $\tau_2$ are non-horizontal, then they are equal.
\end{itemize}
\end{definition}

We then have the following result, whose proof can be found in the full version of this paper.

\begin{restatable}{lemma}{LemmaRealizable}
\label{lem:realizable}
Let $N$ be an IO net.
Then $M \trans{*} M'$ if{}f there exists a well-structured history realizable in $N$ with $M$ and $M'$ as initial and final markings.
\end{restatable}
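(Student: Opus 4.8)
The plan is to prove Lemma~\ref{lem:realizable} by two implications. The right-to-left direction is essentially immediate from the definition of realizability: if $H$ is a well-structured (in fact, any) history realizable in $N$, then by definition there are transitions $t_1,\dots,t_{n-1}$ and exponents $k_1,\dots,k_{n-1}$ with $M_H^1 \trans{t_1^{k_1}} M_H^2 \cdots \trans{t_{n-1}^{k_{n-1}}} M_H^n$, so concatenating these blocks gives a firing sequence witnessing $M = M_H^1 \trans{*} M_H^n = M'$. The well-structuredness hypothesis is simply not needed for this direction, so I would dispatch it in one or two sentences.

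The substance is the left-to-right direction. Given a firing sequence $M = M_0 \trans{t_1} M_1 \trans{t_2} \cdots \trans{t_\ell} M_\ell = M'$, I would first build \emph{some} realizable history out of it by exploiting that IO transitions neither create nor destroy tokens. Fix the $|M|$ tokens as distinguishable objects; a single firing $M_{i-1} \trans{t_i} M_i$ with $t_i = (p_s,p_o) \mapsto (p_d,p_o)$ moves exactly one token from $p_s$ to $p_d$ and leaves all others in place. Thus the firing sequence induces, for each token, a sequence of length $\ell+1$ of places it occupies after steps $0,1,\dots,\ell$; collecting these as a multiset yields a history $H_0$ of length $\ell+1$ whose $i$-th marking is $M_i$, and $H_0$ is realizable with exponents all equal to $1$. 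The issue is that $H_0$ need not be well-structured: at one time index, one firing of $t_i$ might move a token along step $p_s \to p_d$ while the tokens that $t_i$ does \emph{not} touch make the horizontal step $p\to p$, and these two non-horizontal/horizontal patterns already break nothing, but two \emph{different} transitions fired consecutively would, and more importantly within a single ``column'' of the history I want either all-horizontal or all-identical-non-horizontal moves.

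The key transformation is a \textbf{refinement / reordering of the firing sequence into uniform blocks}. I would argue that any firing sequence can be rearranged (possibly lengthening the associated history by inserting extra time indices, i.e.\ stuttering columns where every token stays put) so that it becomes a concatenation of maximal blocks $t^k$ of a single repeated transition, and so that each block is represented in the history by exactly one time index transition. Concretely: from $M_0 \trans{t_1}\cdots\trans{t_\ell} M_\ell$, group consecutive equal transitions; between the groups insert ``do-nothing'' columns. For a block $t^k$ with $t=(p_s,p_o)\mapsto(p_d,p_o)$, the corresponding column of the history has $k$ trajectories taking the non-horizontal step $p_s\to p_d$ and all remaining trajectories taking the horizontal step — but wait, this violates well-structuredness only if some trajectory also needs a \emph{different} non-horizontal step in that column, which cannot happen since a block uses one transition. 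So each such column satisfies the second bullet of well-structuredness (all non-horizontal steps in the column are equal, namely $p_s\to p_d$), and the inserted stuttering columns satisfy the first bullet. Hence the resulting history $H$ is well-structured, realizable (with $k_i$ the block lengths, and $k_i=0$ handled by collapsing or by the convention that $M\trans{t^0}M$), and has the same initial and final markings $M, M'$. The main obstacle is making this block-decomposition argument airtight: one must check that consecutive firings of the \emph{same} transition $t$ really can all be assigned to one history column (i.e.\ that firing $t^k$ moves exactly $k$ distinct tokens along $p_s\to p_d$, which is true because $t$ is enabled $k$ times in a row means there are $\ge k$ tokens in $p_s$ — here $p_s\neq p_d$ and $p_s,p_d\neq p_o$ in the generic case, with the degenerate cases $p_s=p_o$ or $p_d=p_o$ needing a quick separate check), and that the multiset-of-trajectories bookkeeping is consistent across the inserted stuttering columns. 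This is precisely the routine but fiddly verification that the full version carries out.
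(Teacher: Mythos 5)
Your proof is correct and rests on the same core idea as the paper's --- de-anonymizing the tokens and turning each transition occurrence into a column of the history --- but you take a detour the paper avoids. You worry that the naive history $H_0$ with one column per firing might fail well-structuredness, and therefore group consecutive equal firings into blocks $t^k$ and insert stuttering columns. That worry is unfounded: in $H_0$ each column contains exactly one non-horizontal step (the single token moved by that firing, or none if $p_s=p_d$), so the second condition in the definition of well-structuredness holds vacuously in every column, and consecutive firings of \emph{different} transitions live in \emph{different} columns and cannot interfere. The paper exploits exactly this and proves the direction by a short induction on the length of the firing sequence: pick any trajectory ending in the source place (one exists because the transition is enabled), extend it by the non-horizontal step, extend all other trajectories horizontally, and take all exponents $k_i=1$. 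Your block-compressed history is also well-structured and realizable --- the verification you flag (that firing $t^k$ moves $k$ distinct tokens from $p_s$ to $p_d$, including the degenerate cases $p_s=p_o$ and $p_d=p_o$) does go through --- so nothing is broken; you have simply done more work than necessary. The right-to-left direction you dispatch exactly as the paper does.
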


We now proceed to give a syntactic characterization of the well-structured realizable histories. 

\begin{definition}
$H$ is \emph{compatible with $N$}
if for every trajectory $\tau$ of $H$ and for every non-horizontal step $\tau(i)\tau(i+1)$ of $\tau$, 
the net $N$ contains a transition $(\tau(i), p_o)~\mapsto~(\tau(i+1), p_o)$ for some place $p_o$
and $H$ contains a trajectory $\tau'$ with $\tau'(i)=\tau'(i+1)=p_o$.
\end{definition}

\begin{restatable}{lemma}{LemmaHistory}
\label{lem:compatible}
Let $N$ be an IO net. A well-structured history is realizable in $N$ if{}f it is compatible with $N$.
\end{restatable}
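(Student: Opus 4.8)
The plan is to prove Lemma~\ref{lem:compatible} by establishing the two implications separately, with the forward direction (realizable $\Rightarrow$ compatible) being essentially a soundness argument and the backward direction (compatible $\Rightarrow$ realizable) being a construction.

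\textbf{Realizable implies compatible.} Suppose $H$ is well-structured and realizable, witnessed by transitions $t_1,\dots,t_{n-1}$ and exponents $k_1,\dots,k_{n-1}$ with $M_H^i \trans{t_i^{k_i}} M_H^{i+1}$. Fix a trajectory $\tau\in H$ with a non-horizontal step $\tau(i)\tau(i+1)$, so $\tau(i)\neq\tau(i+1)$. I would argue that the token following $\tau$ must have been moved by a firing of $t_i$ at step $i$: since its place changes, $k_i\geq 1$ and $t_i$ is the (necessarily unique, by well-structuredness) non-horizontal transition used at that step, and as an IO transition $t_i$ has the form $(p_s,p_o)\mapsto(p_d,p_o)$ with $p_s=\tau(i)$, $p_d=\tau(i+1)$. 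It remains to exhibit a trajectory $\tau'\in H$ with $\tau'(i)=\tau'(i+1)=p_o$. Here I would use that $t_i$ is enabled at $M_H^i$, so $M_H^i(p_o)\geq 1$ (and if $p_s=p_o$ we need $M_H^i(p_o)\geq 2$, but then some token sits in $p_o$ at step $i$); more carefully, since a token actually moves via $t_i$ and the observed place is $p_o$, there is a token in $p_o$ at marking $M_H^i$, and that token's step $i$ cannot be non-horizontal leaving $p_o$ by a \emph{different} transition (well-structuredness forces all non-horizontal step-$i$ transitions to be $t_i$, and $t_i$ leaves $p_o$ only if $p_o=p_s$, a case handled by token count). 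In all cases one finds a token staying in $p_o$ across step $i$, giving the required $\tau'$. This yields compatibility.

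\textbf{Compatible implies realizable.} Suppose $H$ is well-structured and compatible. I would construct the witnessing firing sequence index by index. For each $i$ with $1\leq i\leq n-1$: if all step-$i$ transitions are horizontal, then $M_H^i=M_H^{i+1}$ and we take $k_i=0$ (any transition, or none). Otherwise, by well-structuredness all non-horizontal step-$i$ moves agree on a single pair $(p_s,p_d)$; compatibility provides a transition $t_i=(p_s,p_o)\mapsto(p_d,p_o)$ of $N$ and guarantees $H$ contains a trajectory parked at $p_o$ throughout step $i$. Let $k_i$ be the number of trajectories whose step $i$ goes from $p_s$ to $p_d$. I then need to check $M_H^i \trans{t_i^{k_i}} M_H^{i+1}$: each intermediate firing of $t_i$ is enabled because (a) there is always a token in $p_o$ — the parked trajectory — which is never consumed since $t_i$ only consumes from $p_s$ when $p_s\neq p_o$, and when $p_s=p_o$ we have at least $k_i$ tokens available there initially plus the parked one, and (b) there remain tokens in $p_s$ to move as long as fewer than $k_i$ firings have occurred. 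The net effect of $k_i$ firings moves exactly $k_i$ tokens from $p_s$ to $p_d$ and leaves every other place unchanged, which is precisely $M_H^{i+1}$ since the only non-horizontal step-$i$ moves are the $k_i$ moves from $p_s$ to $p_d$. Concatenating these blocks over all $i$ gives realizability.

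\textbf{Main obstacle.} The delicate point is the case $p_s=p_o$, i.e.\ a transition $(p,p_o)\mapsto(p_d,p_o)$ where the source and observed places coincide — a self-observation where one token in $p$ moves out while another token in $p$ must remain to serve as the observer. Here I must be careful that compatibility (which asks for a trajectory with $\tau'(i)=\tau'(i+1)=p_o=p_s$, i.e.\ a token that \emph{stays} in $p_s$) genuinely supplies the observer, and that this token is distinct from the $k_i$ tokens being moved — well-structuredness is what guarantees the parked token's step is horizontal rather than also being a move out of $p_s$. I expect that spelling out enabledness in this coincident case, and confirming that the well-structuredness dichotomy exactly rules out the bad configurations, is where the real work lies; the rest is bookkeeping on markings $M_H^i$.
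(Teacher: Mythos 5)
Your proof is correct and follows essentially the same route as the paper's: for the forward direction you identify the unique (by well-structuredness) transition realizing the non-horizontal steps at each position and locate an observer token that stays in $p_o$, and for the backward direction you realize the history position by position by iterating the single compatible transition $k_i$ times. You are in fact somewhat more explicit than the paper about enabledness and the delicate case $p_s=p_o$, which the paper's argument passes over quickly.
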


\begin{example}
In the realizable history $H$ of Figure \ref{figure-history}, all the trajectories are such that the third step is horizontal. 
For every step except the third, all the non-horizontal steps are equal, so $H$ is well-structured.
For $N$ the IO net of Figure \ref{figure-running-example}, $H$ is indeed compatible with $N$.
\end{example}

\subsection{Pruning Histories}

We start by introducing bunches of trajectories.

\begin{definition}
 A \emph{bunch} is a multiset of trajectories with the same length and the same
        initial and final place. 
\end{definition}

\begin{example}
Figure \ref{figure-history}'s realizable history is constituted of a trajectory from $q_3$ to $q_3$ and a bunch $B$ with initial place $q_1$ and final place $q_3$ made up of four different trajectories. 
\end{example}

We show that every well-structured realizable history containing a bunch of size larger than $|P|$ can be ``pruned'', meaning that the bunch can be replaced by a smaller one, while keeping the history well-structured and realizable.

\begin{lemma}
\label{lm:pruning}
Let $N$ be an IO net. Let $H$ be a well-structured history realizable in $N$ containing a bunch $B\subseteq H$ of size larger than $|P|$. There exists a nonempty bunch $B'$ of size at most $|P|$ with the same initial and final
places as $B$, such that the history $H' \defeq H - B + B'$ (where $+, -$ denote multiset addition and subtraction) is also well-structured and realizable in $N$.
\end{lemma}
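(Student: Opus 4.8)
The plan is to reduce the size of the bunch $B$ by identifying, at each ``level'' $i$ of the history, how many trajectories of $B$ are needed and to merge the rest onto surviving trajectories. The key observation, using Lemma~\ref{lem:compatible}, is that realizability of a well-structured history depends only on \emph{compatibility}: every non-horizontal step $\tau(i)\tau(i+1)$ of a trajectory must be ``justified'' by (i) a matching transition $(\tau(i),p_o)\mapsto(\tau(i+1),p_o)$ of $N$, and (ii) the presence in $H$ of some trajectory sitting in $p_o$ at both positions $i$ and $i+1$. Since $H$ is well-structured, at each level $i$ either all steps are horizontal, or all non-horizontal steps are the \emph{same} step $p\to p'$; in the latter case that single step needs just \emph{one} witness trajectory parked in the appropriate $p_o$ at level $i$ (plus, of course, its own matching transition, which is fixed by $N$ and does not change when we shrink $B$). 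So compatibility is a constraint involving only ``how many distinct trajectories occupy each place at each level'', not the exact multiplicities.

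First I would fix the bunch $B$ of size $>|P|$ and look at the set of distinct trajectories occurring in $B$ together with their multiplicities. The idea is to keep a small sub-multiset $B'\subseteq B$ of distinct trajectories such that, at every level $i$, $B'$ covers every place that $B$ occupies at level $i$ --- i.e.\ $\supp{M^i_{B'}} = \supp{M^i_B}$ for all $i$ --- and then dump all the remaining trajectories of $B$ onto one chosen trajectory of $B'$ by increasing its multiplicity (this keeps $B'$ a bunch with the same initial and final places, and $|H'|=|H|$, but $|B'|\le|P|$... wait, more carefully: I want the number of \emph{distinct} trajectories small). Concretely: greedily pick trajectories to ensure coverage. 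Since there are only $|P|$ places, and coverage at each level is a set-cover-type condition over a universe of size $|P|$, one can always choose a subset $B''$ of the distinct trajectories of $B$ of size at most $|P|$ that achieves $\supp{M^i_{B''}}=\supp{M^i_B}$ for every level $i$ simultaneously --- here I would argue level by level, noting that adding the ``missing-place witnesses'' one place at a time needs at most $|P|$ trajectories total because each newly added trajectory covers at least one previously uncovered place at \emph{some} level, and once a place is covered at the level where it first appears in a surviving trajectory... hmm, this needs the cleaner argument below. Then set $B' := B''$ with one of its trajectories given extra multiplicity $|B|-|B''|$ so that $|B'|=|B|$; actually the statement only asks for $|B'|\le|P|$, so I would instead just take $B'=B''$ directly and let $|H'|<|H|$ --- re-reading the statement, it asks $H'\defeq H-B+B'$ be well-structured and realizable, and $|B'|\le|P|$, with no constraint tying $|B'|$ to $|B|$, so $B'=B''$ suffices.

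Next I would verify the two required properties of $H' = H - B + B'$. \emph{Well-structuredness:} at each level $i$, the steps taken by trajectories of $H'$ form a sub(multi)set of the steps taken by trajectories of $H$ (every trajectory of $B'$ is a trajectory of $B\subseteq H$), so if all steps at level $i$ were horizontal in $H$ they still are in $H'$, and if all non-horizontal steps at level $i$ agreed in $H$ they still agree in $H'$. \emph{Realizability via compatibility (Lemma~\ref{lem:compatible}):} take any trajectory $\tau\in H'$ and any non-horizontal step $\tau(i)\tau(i+1)$. If $\tau$ was already in $H-B$, then $H$ contained a witness trajectory $\tau'$ with $\tau'(i)=\tau'(i+1)=p_o$; if $\tau'\in H-B$ it survives, and if $\tau'\in B$ then $p_o\in\supp{M^i_B}=\supp{M^i_{B'}}$ gives a surviving witness in $B'$ (using that a trajectory parked at $p_o$ at level $i$ is parked there at level $i+1$ too, because well-structuredness forces either all horizontal steps at level $i$ --- fine --- or the unique non-horizontal step is $p\to p'\ne p_o$ wait, I need $p_o$'s occupant to make a horizontal step; this holds because the unique non-horizontal step at level $i$ is $\tau(i)\to\tau(i+1)$ and a trajectory at $p_o$ can only move non-horizontally via that same step, so if $p_o\ne\tau(i)$ it stays put). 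The symmetric argument handles $\tau\in B'$. Hence $H'$ is compatible with $N$, so by Lemma~\ref{lem:compatible} it is realizable.

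\textbf{Main obstacle.} The delicate point is the combinatorial claim that one can choose a \emph{single} small sub-multiset $B''$ of distinct trajectories of $B$ that covers the occupied-place pattern at \emph{every} level simultaneously with only $|P|$ trajectories --- a naive per-level bound would give $|P|$ times the length of $H$. I expect the right argument is: process levels left to right; a trajectory, once selected, occupies exactly one place per level, and I should select trajectories so that each selection is ``charged'' to a place it is the first selected occupant of at the level where that place's occupancy by the surviving bunch is in danger --- a Dilworth/greedy argument showing the number of trajectories needed equals at most the number of places, because the ``occupancy profiles'' of trajectories in a bunch, read as the sequence of places visited, have the property that the union of occupied places across all trajectories at each level is an interval-like structure one can cover greedily. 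Pinning down this greedy argument cleanly --- and making sure the chosen $B''$ is nonempty, which is automatic since $B$ is nonempty and the initial place must be covered --- is the real work; everything else is the routine compatibility bookkeeping sketched above.
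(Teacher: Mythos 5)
There is a genuine gap, and it sits exactly where you flagged it: the combinatorial claim that some sub(multi)set $B''\subseteq B$ of at most $|P|$ of the \emph{existing} trajectories satisfies $\supp{M^i_{B''}}=\supp{M^i_{B}}$ at every level $i$ is false, so no greedy or Dilworth-type argument can rescue it. Counterexample: take places $a,b,c$, a witness trajectory parked at $c$ outside the bunch, transitions $(a,c)\mapsto(b,c)$ and $(b,c)\mapsto(a,c)$, and a bunch of $k>3$ trajectories from $a$ to $a$ in which $\tau_1$ hops to $b$ and back during steps $1$--$2$, then $\tau_2$ during steps $3$--$4$, and so on, each at pairwise disjoint times. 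This history is well-structured and realizable, but at the level where $\tau_j$ sits at $b$ it is the \emph{only} trajectory at $b$, so any sub-multiset reproducing the occupied-place pattern at every level must contain all $k$ trajectories, while $|P|=3$ stays fixed. The rest of your argument (well-structuredness is inherited, compatibility only needs a witness parked at the observed place at positions $i$ and $i+1$, and your analysis of the $p_o=\tau(i)$ case) is fine, but it all rests on this false selection claim.

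The fix is to drop the requirement that $B'$ consist of trajectories of $B$ and instead \emph{splice new ones}: for each place $p$ visited by $B$, let $f(p)$ and $l(p)$ be the first and last moments any trajectory of $B$ occupies $p$, and build $\tau_p$ by following a fragment of a trajectory of $B$ that reaches $p$ at time $f(p)$, parking at $p$ (horizontal steps) until $l(p)$, and then following a fragment of a trajectory of $B$ that is at $p$ at time $l(p)$ out to the common final place. Taking $B'=\{\tau_p\mid p\in P_B\}$ gives $|B'|\le|P|$; every non-horizontal step of $B'$ is a copy, at the same position, of a step of $B$, so well-structuredness survives; and any witness $p_o p_o$ at position $(i,i+1)$ that was supplied by $B$ forces $f(p_o)\le i<i+1\le l(p_o)$, so $\tau_{p_o}$ supplies it in $B'$. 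The point you were missing is that \emph{over}-covering a place at levels where $B$ did not occupy it costs nothing, because horizontal steps never need justification and never break well-structuredness — whereas insisting on exact coverage by a subset of $B$ is impossible within the $|P|$ budget.
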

\begin{proof}
Let $P_B$ be a set of all places visited by at least one trajectory in the bunch $B$.
        For every $p\in P_B$ let $f(p)$ and $l(p)$ be the earliest and the latest
        moment in time when this place has been used by any of the trajectories
        (the first and the last occurrence can be in different trajectories).

        Let $\tau_p, p\in P_B$ be a trajectory that first goes to $p$ by the moment $f(p)$, then
        waits there until $l(p)$, then goes from $p$ to the final place.
        To go to and from $p$ it uses fragments of trajectories of $B$.

        We will take $B'=\{\tau_p\mid p\in P_B\}$ and prove that replacing $B$ with $B'$ in $H$
        does not violate the requirements for being a well-structured history realizable in $N$.
        Note that we can copy the same fragment of
        a trajectory multiple times.

        First let us check the well-structuring condition. Note that we build $\tau_p$ by taking fragments
        of existing trajectories and using them at the exact same moments in time, and by adding some
        horizontal fragments. Therefore, the set of non-horizontal steps in $B'$ is a subset (if
        we ignore multiplicity) of the set of non-horizontal steps in $B$, and the replacement
        operation cannot increase the set of non-horizontal steps occurring in $H$.

        Now let us check compatibility with $N$. 
        Consider any non-horizontal step in $H'$ in any trajectory at position $(i,i+1)$. By construction, the same step at the same position is also present in $H$. History $H$ is realizable in $N$ and thus by Lemma \ref{lem:compatible} it is compatible with $N$, so $H$ contains an enabling horizontal step $p_o p_o$ in some trajectory at that position $(i,i+1)$. There are two cases: either that step $p_o p_o$ was provided by a bunch being pruned, or by a bunch not affected by pruning. In the first case, note that the place $p_o$ of this horizontal step must be first observed no later than $j$, and last observed not earlier than $j+1$. This implies $f(p_o)\leq i<i+1\leq l(p_o)$. As $H'$ contains a horizontal step $p_o p_o$ for all positions between $f(p_o)$ and $l(p_o)$, in particular it contains it at position $(i,i+1)$. In the second case the same horizontal step is present in $H'$ as a part of the same trajectory. 
       
       So $H'$ is well-structured and compatible with $N$, and thus by Lemma \ref{lem:compatible} realizable in $N$.
        \qed
\end{proof}

\vspace{-0.5cm}
\begin{figure}
\vskip-0.5cm
\centering
\begin{tikzpicture}

  \node[circle, draw]
    (Node11)
    {~~~}; 
    \node[right=1 of Node11, circle, draw]
    (Node12)
    {~~~};
    \node[right=1 of Node12, circle, draw]
    (Node13)
    {~~~};
    \node[right=1 of Node13, circle, draw]
    (Node14)
    {~~~};
    \node[right=1 of Node14, circle, draw]
    (Node15)
    {~~~};
    \node[right=1 of Node15, circle, draw]
    (Node16)
    {~~~};
    \node[right=1 of Node16, circle, draw]
    (Node17)
    {~~~};
    \node[below=0.7 of Node11, circle, draw]
    (Node21)
    {~~~};
    \node[right=1 of Node21, circle, draw]
    (Node22)
    {~~~};
    \node[right=1 of Node22, circle, draw]
    (Node23)
    {~~~};
    \node[right=1 of Node23, circle, draw]
    (Node24)
    {~~~};
    \node[right=1 of Node24, circle, draw]
    (Node25)
    {~~~};
    \node[right=1 of Node25, circle, draw]
    (Node26)
    {~~~};
    \node[right=1 of Node26, circle, draw]
    (Node27)
    {~~~};
    \node[below=0.7 of Node21, circle, draw]
    (Node31)
    {~~~};
    \node[right=1 of Node31, circle, draw]
    (Node32)
    {~~~};
    \node[right=1 of Node32, circle, draw]
    (Node33)
    {~~~}; 
    \node[right=1 of Node33, circle, draw]
    (Node34)
    {~~~};
    \node[right=1 of Node34, circle, draw]
    (Node35)
    {~~~};
    \node[right=1 of Node35, circle, draw]
    (Node36)
    {~~~};
    \node[right=1 of Node36, circle, draw]
    (Node37)
    {~~~};
        \node[left=0.3 of Node11]
        (Node11l)
        {$q_1$};
        \node[right=0.3 of Node17]
        (Node17r)
        {$q_1$};
        \node[left=0.3 of Node21]
        (Node21l)
        {$q_2$};
        \node[right=0.3 of Node27]
        (Node27r)
        {$q_2$};
        \node[left=0.3 of Node31]
        (Node31l)
        {$q_3$};
        \node[right=0.3 of Node37]
        (Node37r)
        {$q_3$};

        \node[above=0.0 of Node11, color=blue](Node11u){$f$};
        \node[above=0.0 of Node14, color=blue](Node14u){$l$};
        \node[above=0.0 of Node23, color=blue](Node23u){$f$};
        \node[above=0.0 of Node26, color=blue](Node26u){$l$};
        \node[above=0.0 of Node32, color=blue](Node32u){$f$};
        \node[above=0.0 of Node37, color=blue](Node37u){$l$};
        
  \draw[blue]
  ($(Node11.center)+(0,1.50mm)$)
    --
  ($(Node12.center)+(0,1.50mm)$)
  ;
  \draw[blue]
  ($(Node12.center)+(0,1.50mm)$)
    --
  ($(Node13.center)+(0,1.50mm)$)
  ;
  \draw[blue]($(Node13.center)+(0,1.50mm)$)
    --
  ($(Node14.center)+(0,1.50mm)$)
  ;
  \draw[blue]
  ($(Node14.center)+(0,1.50mm)$)
    --
  ($(Node35.center)+(0,1.50mm)$)
  ;
  \draw[blue]
  ($(Node35.center)+(0,1.50mm)$)
    --
  ($(Node36.center)+(0,1.50mm)$)
  ;
  \draw[blue]
  ($(Node36.center)+(0,1.50mm)$)
    --
  ($(Node37.center)+(0,1.50mm)$)
  ;
  \draw[blue]
  ($(Node11.center)+(0,0.0mm)$)
    --
  ($(Node12.center)+(0,0.0mm)$)
  ;
  \draw[blue]
  ($(Node12.center)+(0,0.0mm)$)
    --
  ($(Node23.center)+(0,0.0mm)$)
  ;
  \draw[blue]
  ($(Node23.center)+(0,0.0mm)$)
    --
  ($(Node24.center)+(0,0.0mm)$)
  ;
  \draw[blue]
  ($(Node24.center)+(0,0.0mm)$)
    --
  ($(Node25.center)+(0,0.0mm)$)
  ;
  \draw[blue]
  ($(Node25.center)+(0,0.0mm)$)
    --
  ($(Node26.center)+(0,0.0mm)$)
  ;
  \draw[blue]
  ($(Node26.center)+(0,0.0mm)$)
    --
  ($(Node37.center)+(0,0.0mm)$)
  ;
  \draw[blue]
  ($(Node11.center)+(0,-0.75mm)$)
    --
  ($(Node32.center)+(0,-0.75mm)$)
  ;
  \draw[blue]
  ($(Node32.center)+(0,-0.75mm)$)
    --
  ($(Node33.center)+(0,-0.75mm)$)
  ;
  \draw[blue]
  ($(Node33.center)+(0,-0.75mm)$)
    --
  ($(Node34.center)+(0,-0.75mm)$)
  ;
  \draw[blue]
  ($(Node34.center)+(0,-0.75mm)$)
    --
  ($(Node35.center)+(0,-0.75mm)$)
  ;
  \draw[blue]
  ($(Node35.center)+(0,-0.75mm)$)
    --
  ($(Node36.center)+(0,-0.75mm)$)
  ;
  \draw[blue]
  ($(Node36.center)+(0,-0.75mm)$)
    --
  ($(Node37.center)+(0,-0.75mm)$)
  ;
  \draw
  ($(Node31.center)+(0,-1.50mm)$)
    --
  ($(Node32.center)+(0,-1.50mm)$)
  ;
  \draw
  ($(Node32.center)+(0,-1.50mm)$)
    --
  ($(Node33.center)+(0,-1.50mm)$)
  ;
  \draw
  ($(Node33.center)+(0,-1.50mm)$)
    --
  ($(Node34.center)+(0,-1.50mm)$)
  ;
  \draw
  ($(Node34.center)+(0,-1.50mm)$)
    --
  ($(Node35.center)+(0,-1.50mm)$)
  ;
  \draw
  ($(Node35.center)+(0,-1.50mm)$)
    --
  ($(Node36.center)+(0,-1.50mm)$)
  ;
  \draw
  ($(Node36.center)+(0,-1.50mm)$)
    --
  ($(Node37.center)+(0,-1.50mm)$)
  ;

\end{tikzpicture}
\caption{ History $H$ of Figure \ref{figure-history} after pruning.}
\vskip-0.5cm
\label{figure-pruned}
\end{figure}
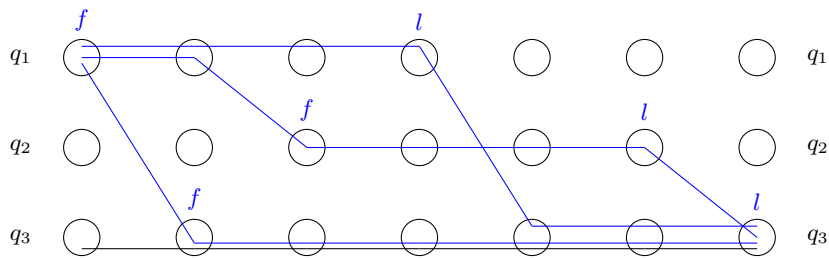

\vspace{-0.5cm}
\begin{example}
Consider the well-structured realizable history of Figure \ref{figure-history}, leading from $(4,0,1)$ to $(0,0,5)$, which covers marking $(0,0,2)$.
Bunch $B$ from $q_1$ to $q_3$ is of size four which is bigger than $|P|=3$. 
The set $P_B$ of places visited by trajectories of $B$ is equal to $P$.
Figure \ref{figure-pruned} is annotated with the first and last moments $f(p)$ and $l(p)$ for $p \in P_B$.
Lemma \ref{lm:pruning} applied to $H$ and $B$ ``prunes" bunch $B$ into $B'$ made up of trajectories $\tau_{q_1},\tau_{q_2},\tau_{q_3}$, drawn in blue in Figure \ref{figure-pruned}.
Notice that in this example, the non-horizontal $5$-th step in $H$ does not appear in the new well-structured and realizable history $H' = H - B + B'$.
History $H'$ is such that $M_{H'}^1=(3,0,1) \trans{t_3 t_1 t_3 t_4 }(0,0,4)=M_{H'}^7$ and $M_{H'}^7 \geq (0,0,2)$. 
\end{example}

\subsection{Proof of the Pruning Theorem}
\label{subsec-proof}

Using Lemma \ref{lm:pruning} we can now finally prove the Pruning Theorem:

\begin{proof}[of Theorem \ref{thm:pruning}]
Let $M'' \trans{*}  M'  \  \geq  \ M$. 
By Lemma \ref{lem:realizable}, there is a well-structured realizable history $H$ with $M''$ and $M'$ as initial and final markings, respectively. Let $H_M \subset H$ be an arbitrary sub(multi)set of $H$ with final
marking $M$, and define $H'=H-H_M$. Further, for every $p, p' \in P$, let $H'_{p,p'}$ be the bunch of all trajectories of $H'$ with $p$ and $p'$ as initial and final places, respectively. We have
$$H' = \sum_{p,p' \in P} H'_{p,p'}$$

So $H'$ is the union of $|P|^2$ (possibly empty) bunches. Apply Lemma \ref{lm:pruning} to each bunch of $H'$ with more than $|P|$ trajectories yields a new history

$$H'' = \sum_{p,p' \in p} H''_{p,p'}$$

\noindent such that $|H''_{p,p'}| \leq |P|$ for every $p, p' \in P$, and such that the history $H'' + H_M$ is well-structured and realizable. 

Let $S''$ and $S'$ be the initial and final markings of $H'' + H_M$. We show that $S''$ and $S'$ satisfy the required properties:
\begin{itemize}
\item $S'' \trans{*} S'$, because  $H'' + H_M$ is well-structured and realizable.
\item $S' \geq M$, because $H_M \subseteq H'' + H_M$. 
\item $|S''| \leq |M| + |P|^3$ because $|H'' + H_M| = \sum_{p', p} |H''_{p,p'}| + |H_M| \leq |P|^2 \cdot |P| + |M| = |M| + |P|^3$.
\end{itemize}
This concludes the proof.
\qed
\end{proof}

\begin{remark}
A slight modification of our construction allows one to prove Theorem~\ref{thm:pruning} (but not Lemma \ref{lm:pruning}) with $2|P|^2$ overhead instead of $|P|^3$. We provide more details in the full version \cite{EsparzaRW19}.
However, since some results of Section \ref{CC} explicitly rely on Lemma~\ref{lm:pruning}, we prove Theorem~\ref{thm:pruning} as a consequence of Lemma~\ref{lm:pruning} for simplicity.
\end{remark}
\section{Counting Constraints and Counting Sets}
\label{CC}
In this section we first briefly recall \emph{counting constraints} \cite{EsparzaGMW18}
\footnote{Actually, our counting constraints correspond to the
``counting constraints in normal form'' of [14]. We shorten the name,
because we never need counting constraints not in normal form.}, a class of constraints that allow us to finitely represent  (possibly infinite) sets of markings, called \emph{counting sets}. We prove Theorem \ref{thm:new18}, a powerful result stating that counting sets of IO nets are closed under reachability, and giving a very tight relation between
the sizes of the constraints representing a counting set, and the set of markings reachable from it. Theorem \ref{thm:new18} strongly improves on Theorem 18 of ~\cite{EsparzaGMW18}.

\paragraph{Counting constraints and counting sets. }  Recall Definition \ref{def:cube} which defines a \emph{cube} of a net $N$ as a set of markings given by a lower bound $L \colon P \rightarrow \N$ and an upper bound $U \colon P \rightarrow \N \cup \{\infty\}$, written $(L,U)$, and such that $M \in (L,U)$ if{}f $L \leq M \leq U$. 
In the rest of the paper, the term cube will refer both to the set of markings and to the description by upper and lower bound $(L,U)$.
A \emph{counting constraint} is a formal finite union of cubes, i.e. a formal finite union of upper and lower bound pairs of the form $(L,U)$.
The semantics of a counting constraint is called a \emph{counting set} and it is the union of the cubes defining the counting constraint.
The counting set for a counting constraint $\Gamma$ is denoted $\sem{\Gamma}$.
Notice that one counting set can be the semantics of different counting constraints.
For example, consider a net with just one place $p_1$. Let $(L,U)=(1,3)$, $(L',U')=(2,4)$, $(L'',U'')=(1,4)$. 
The counting constraints $(L,U)\cup(L',U')$ and $(L'',U'')$ define the same counting set.
It is easy to show (see also \cite{EsparzaGMW18}) that counting sets are closed under Boolean operations.

\paragraph{Measures of counting constraints.} Let $C=(L,U)$ be a cube, and let $\Gamma = \bigcup_{i=1}^m C_i$
be a counting constraint. We use the following notations:
$$\begin{array}{lcl}
\lnorm{C} \defeq \displaystyle \sum_{p \in P} L(p) & \quad & \unorm{C} \defeq \displaystyle \sum_{\substack{p\in P\\ U(p)<\infty}} \!\!\!\!U(p) \mbox{ (and \( 0\) if \(U(p)=\infty\) for all \(p\))}. \\[0.2cm]
\lnorm{\Gamma} \defeq \displaystyle \max_{i\in [1,m]} \{ \lnorm{C_i} \} &  & \unorm{\Gamma} \defeq \displaystyle \max_{i\in[1,m]} \{ \unorm{C_i} \}
\end{array}$$
\noindent We call $\lnorm{C}$ the $L$-norm  and $\unorm{C}$ the $U$-norm of $C$. Similarly for $\Gamma$.
We recall Proposition 5 of~\cite{EsparzaGMW18} for the norms of the union, intersection and complement.

\begin{proposition}
\label{prop:oponconf}
Let $\Gamma_1, \Gamma_2$ be counting constraints.
\begin{itemize}
\item There exists a counting constraint $\Gamma$ with $\sem{\Gamma} = \sem{\Gamma_1} \cup \sem{\Gamma_2}$ such that
$\unorm{\Gamma} \leq \max \{\unorm{\Gamma_1}, \unorm{\Gamma_2} \}$ and $\lnorm{\Gamma} \leq \max \{\lnorm{\Gamma_1}, \lnorm{\Gamma_2} \}$.
\item  There exists a counting constraint $\Gamma$ with $\sem{\Gamma} = \sem{\Gamma_1} \cap \sem{\Gamma_2}$ such that
$\unorm{\Gamma} \leq \unorm{\Gamma_1} + \unorm{\Gamma_2}$ and $\lnorm{\Gamma} \leq \lnorm{\Gamma_1} + \lnorm{\Gamma_2}$.
\item There exists a counting constraint $\Gamma$ with $\sem{\Gamma} = \N^n \setminus \sem{\Gamma_1}$ such that
$\unorm{\Gamma} \leq n\lnorm{\Gamma_1}$ and $\lnorm{\Gamma} \leq n\unorm{\Gamma_1} + n$.
\end{itemize}
 \end{proposition}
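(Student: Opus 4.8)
The plan is to treat the three cases separately, in increasing order of difficulty, by giving an explicit construction of $\Gamma$ in each case and bounding its $L$- and $U$-norms through the definitions. \textbf{Union.} Given $\Gamma_1 = \bigcup_i C_i$ and $\Gamma_2 = \bigcup_j C_j'$, I would simply take $\Gamma$ to be the concatenation of the two lists of cubes. Then $\sem{\Gamma} = \sem{\Gamma_1} \cup \sem{\Gamma_2}$ holds by definition of the semantics, and since $\lnorm{\cdot}$ and $\unorm{\cdot}$ of a counting constraint are maxima over its constituent cubes, we get $\lnorm{\Gamma} = \max\{\lnorm{\Gamma_1},\lnorm{\Gamma_2}\}$ and $\unorm{\Gamma} = \max\{\unorm{\Gamma_1},\unorm{\Gamma_2}\}$ at once.

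\textbf{Intersection.} The first observation is that the intersection of two cubes $C = (L,U)$ and $C' = (L',U')$ is again a cube, namely $(\max(L,L'),\min(U,U'))$ with $\max$ and $\min$ taken place-by-place (if some place gets a lower bound exceeding its upper bound the cube is empty and is discarded). From $\max(L(p),L'(p)) \leq L(p)+L'(p)$ one gets $\lnorm{C\cap C'} \leq \lnorm{C}+\lnorm{C'}$; and since $\min(U(p),U'(p))$ is finite exactly when at least one of $U(p),U'(p)$ is, and is then bounded by whichever of the two is finite, one gets $\unorm{C\cap C'}\leq\unorm{C}+\unorm{C'}$. Now distribute intersection over union: $\sem{\Gamma_1}\cap\sem{\Gamma_2} = \bigcup_{i,j}(C_i\cap C_j')$, and let $\Gamma$ collect these cubes. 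Each contributes an $L$-norm at most $\lnorm{\Gamma_1}+\lnorm{\Gamma_2}$ and a $U$-norm at most $\unorm{\Gamma_1}+\unorm{\Gamma_2}$, hence so does the maximum over them.

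\textbf{Complement.} Write $\sem{\Gamma_1} = \bigcup_{i=1}^m C_i$ with $C_i=(L_i,U_i)$ (remove any empty $C_i$ first), so $\N^n\setminus\sem{\Gamma_1} = \bigcap_{i=1}^m(\N^n\setminus C_i)$. A marking lies outside $C_i$ iff some place violates a bound, so $\N^n\setminus C_i$ is the union of the single-place cubes $\{M : M(p)\leq L_i(p)-1\}$ for $p$ with $L_i(p)\geq 1$ and $\{M : M(p)\geq U_i(p)+1\}$ for $p$ with $U_i(p)<\infty$. Distributing the $m$-fold intersection over all these unions yields a union of cubes, one per \emph{choice function} that picks, for each $i$, one of the listed single-place constraints of $\N^n\setminus C_i$. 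Fix such a choice function and group the picked constraints by the place they constrain: at place $p$ the resulting cube has lower bound equal to the maximum of the relevant $U_i(p)+1$ (or $0$ if none), which is at most $\unorm{\Gamma_1}+1$, and upper bound equal to the minimum of the relevant $L_i(p)-1$ (or $\infty$ if none), which when finite is at most $\lnorm{\Gamma_1}$. Summing over the $n$ places gives $L$-norm at most $n\unorm{\Gamma_1}+n$ and $U$-norm at most $n\lnorm{\Gamma_1}$, uniformly over all choice functions; discard empty cubes, and $\Gamma$ is the union of the rest.

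\textbf{Expected obstacle.} Union and intersection are routine. The one genuinely delicate point is the factor $n=|P|$ in the complement bound: complementing each $C_i$ individually and then iterating the intersection bound $m-1$ times would produce a factor $m$, not $n$. Obtaining $n\lnorm{\Gamma_1}$ and $n\unorm{\Gamma_1}+n$ hinges on the fact that, after distributing the $m$-fold intersection, all the constraints that land on a common place collapse into a single $\min$ (resp.\ $\max$), so each of the $n$ places contributes exactly one $L$-sized (resp.\ $U$-sized) term irrespective of how many cubes $\Gamma_1$ has.
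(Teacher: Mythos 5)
Your proof is correct. Note that the paper itself gives no proof of this proposition --- it is recalled verbatim as Proposition~5 of \cite{EsparzaGMW18} --- so there is nothing to compare against line by line; your argument is the standard one and handles the only delicate point correctly, namely that after distributing the $m$-fold intersection in the complement case, all chosen single-place constraints landing on the same place collapse into one $\max$ (resp.\ one $\min$), so each of the $n$ places contributes a single term bounded by $\unorm{\Gamma_1}+1$ (resp.\ $\lnorm{\Gamma_1}$), yielding the factor $n$ rather than the factor $m$ that naive iteration of the intersection bound would give.
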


\paragraph{Predecessors and successors of counting sets.}
Fix an IO net $N=(P,T,F)$. The sets of predecessors and successors of a set $\mathcal{M}$ of markings of $N$ are defined as follows:
$\pre^*(\mathcal{M}) \defeq \{ M' | \exists M \in \mathcal{M} \, . \, M' \xrightarrow{*} M \}$, and 
$\post^*(\mathcal{M}) \defeq \{ M | \exists M' \in \mathcal{M} \, . \, M' \xrightarrow{*} M \}$.

\begin{lemma}
\label{lm:smallminterm}
Let $(L,U)$ be a cube of an IO net $N$ of place set $P$.
For all $M' \in \pre^*(L,U)$, there exists a cube $(L',U')$ such that 
\begin{enumerate}
\item $M' \in (L',U') \subseteq  \pre^*(L,U)$, and
\item $\lnorm{(L',U')} \leq \lnorm{(L,U)}  + |P|^3$ and $\unorm{(L',U')} \leq \unorm{(L,U)}$.
\end{enumerate}
\end{lemma}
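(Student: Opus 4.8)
\textbf{Proof plan for Lemma~\ref{lm:smallminterm}.}
The plan is to use the Pruning Theorem (Theorem~\ref{thm:pruning}) applied to a suitable covering target, and then read off the cube $(L',U')$ from the small predecessor it produces. Fix $M' \in \pre^*(L,U)$, so there is a marking $M \in (L,U)$ with $M' \trans{*} M$. The idea is that the lower bound $L$ only constrains finitely many tokens (those counted by $\lnorm{(L,U)}$), while the upper bound $U$ constrains a total of $\unorm{(L,U)}$ tokens from above; the remaining tokens of $M$ are ``free''. I would first split $M$ as $M = M_L + M_{\mathrm{free}}$ where $M_L \leq M$ is a marking with $M_L \geq L$ componentwise but $|M_L|$ as small as possible (so $|M_L| = \lnorm{(L,U)}$ whenever $L \leq M$, which holds), and $M_{\mathrm{free}} = M - M_L$ is whatever is left.

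The key step is to apply the Pruning Theorem to the firing sequence $M' \trans{*} M$ with covering target $M_L$: since $M \geq M_L$, we obtain markings $S''$ and $S'$ with $S'' \leq M'$, $S'' \trans{*} S'$, $S' \geq M_L$, and $|S''| \leq |M_L| + |P|^3 = \lnorm{(L,U)} + |P|^3$. Now I would define the cube $(L', U')$ by setting, for each place $p$: $L'(p) \defeq S''(p)$ if I want $M'$ itself to sit at the lower corner — but that is too rigid; instead the right choice is to take $L'$ to be the ``free part'' of $M'$ brought down to what $S''$ certifies, and $U'$ to match $U$ on the places that $U$ constrains. Concretely, let $L'(p) = M'(p) - (\text{amount of the free surplus at } p)$ so that $L' \leq M'$ and $\lnorm{(L',U')} \le |S''| \le \lnorm{(L,U)} + |P|^3$, and let $U'(p) = U(p)$ wherever $U(p) < \infty$ and $U'(p) = \infty$ otherwise, so $\unorm{(L',U')} \le \unorm{(L,U)}$. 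The containment $(L',U') \subseteq \pre^*(L,U)$ then follows because any $N \in (L',U')$ can be written as $S'' + (N - L')$ plus possibly extra tokens bounded by $U$; firing the sequence $S'' \trans{*} S'$ on the $S''$-part (leaving the extra tokens frozen, which is legitimate in an IO net since frozen tokens can only help enable observation-transitions, but actually one must be careful — one reuses the \emph{history} machinery: extend the realizable history of $S'' \trans{*} S'$ by adding horizontal trajectories for the surplus tokens, which preserves compatibility and hence realizability by Lemma~\ref{lem:compatible}) reaches a marking $\geq S' + (\text{surplus}) \geq M_L + (\text{surplus})$, and by choosing the surplus to land inside $[L,U]$ we land in $(L,U)$.

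The main obstacle I anticipate is the bookkeeping in the last step: one must verify that the surplus tokens of an arbitrary $N \in (L',U')$ can be routed (or frozen) so that the reached marking lies \emph{below} $U$, not just above $L$. This is where the $\unorm{(L',U')} \leq \unorm{(L,U)}$ bound does real work: the $U$-constrained coordinates of $N$ are already $\leq U$, and the pruned sequence $S'' \trans{*} S'$ together with horizontal padding must be shown not to overshoot them — intuitively true because the original sequence $M' \trans{*} M$ ended below $U$ on those coordinates, but making this precise requires tracking that pruning does not move tokens onto $U$-constrained places beyond what the original run did. I would handle this by choosing $S''$ more carefully: apply the Pruning Theorem not to $M_L$ alone but observe that conservativity pins $|S'|=|S''|$, and on the finitely many $U$-bounded places the pruned history is a submultiset of trajectory-fragments of the original, so the $U$-bounded coordinates of every intermediate (in particular final) marking of the padded sequence are dominated by those of the original run, which were $\leq U$. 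Once that domination is in hand, items~1 and~2 of the lemma are immediate from the construction.
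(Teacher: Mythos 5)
There is a genuine gap, and it sits exactly where the lemma is hardest: your choice of $U'$. You set $U'(p)=U(p)$ wherever $U(p)<\infty$ and $U'(p)=\infty$ otherwise, i.e.\ $U'=U$. But then item~1 already fails: a predecessor $M'$ of a marking in $(L,U)$ need not itself satisfy $M'\leq U$. In the net of Figure~\ref{figure-running-example}, take $L=(0,0,2)$ and $U=(0,\infty,\infty)$; then $M'=(5,0,1)\in\pre^*(L,U)$ via five firings of $t_3$, yet $M'(q_1)=5>0=U(q_1)$, so $M'\notin(L',U')$. The correct upper bound must be computed per \emph{source} place from where the tokens starting there can end up: the paper sets $U'(p)=\infty$ if{}f some bunch of the witnessing history that starts at $p$ ends at a place $f$ with $U(f)=\infty$, and $U'(p)=M'(p)$ otherwise. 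This is the essential idea of the lemma, and it is absent from your plan; no definition of $U'$ that only reads off $U$ coordinatewise can work.

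The containment argument has a related problem. You propose to realize an arbitrary $N\in(L',U')$ by running the pruned sequence $S''\trans{*}S'$ on the $S''$-part and padding the surplus tokens with horizontal trajectories. Frozen surplus tokens stay at their initial place $p$, so if $U(p)$ is finite (possibly $0$) the final marking can violate $U$ there --- and even when $N(p)\leq U(p)$ initially, tokens of the pruned run may also terminate at $p$ and push the total above $U(p)$. The paper instead \emph{boosts} bunches: it increases the multiplicity of an existing trajectory of a bunch, so the surplus tokens travel from $p$ to that bunch's final place, which either is unbounded (Case 1) or has enough slack because the boosted sizes never exceed the sizes of the corresponding bunches in the original history $H_M$, whose final marking $M$ satisfies $M\leq U$ (Case 2). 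Your lower-bound side ($L'$ taken to be the small initial marking produced by the Pruning Theorem, giving $\lnorm{(L',U')}\leq\lnorm{(L,U)}+|P|^3$) matches the paper and is fine, but as written the proposal establishes neither $M'\in(L',U')$ nor $(L',U')\subseteq\pre^*(L,U)$.
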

\begin{proof}
Let $M'$ be a marking of $\pre^*(L,U)$.
There exists a marking $M \in (L,U)$ such that $M' \longrightarrow M$, and $M \geq L$.
The construction from the Pruning Theorem 
applied to this firing sequence
yields markings $S',S$ such that 
\begin{center}
\(
\begin{array}[b]{@{}c@{}c@{}c@{}c@{}c@{}c@{}c@{}}
M' &  \trans{\hspace{1em}*\hspace{1em}} & M & \  \geq \ &L  \\[0.1cm]
\geq &  & \geq \\[0.1cm]
S' & \trans{\hspace{1em}*\hspace{1em}} &S  & \geq & L
\end{array}
\)
\end{center}
and $|S'| \leq |L| + |P|^3$.
Since $M$ is in $(L,U)$, we have $U\geq M \geq S \geq L$ and so marking $S$ is in $(L,U)$ and $S'$ is in $\pre^*(L,U)$.

We want to find $L',U'$ satisfying the conditions of the Lemma, i.e. such that $M' \in (L', U')$ and $(L',U') \subseteq  \pre^*(L,U)$.
We define $L'$ as equal to marking $S'$ over each place of $P$.
The following part of the proof plays out in the setting of the Pruning Theorem section, in which the tokens are de-anonymized. 
Let $H_M$ be a well-structured realizable history from $M'$ to $M$. 
Let $p$ be a place of $P$.
We want to define $U'(p)$.
Consider $\mathcal{B}^M_p$ the set of bunches in history $H_M$ that have $p$ as an initial place.
For every bunch $B$, let $f_B$ be the final place of the bunch.
We define $U'(p)$ depending on the final places of bunches in $\mathcal{B}^M_p$.

\noindent
\emph{Case 1.} 
There exists a bunch $B$ in $\mathcal{B}^M_p$ whose final place $f_B$ is such that $U(f_B) = \infty$.
In this case we define $U'(p)$ to be $\infty$.

\noindent
\emph{Case 2.}
For all bunches $B$ in $\mathcal{B}^M_p$, the final place $f_B$ of $B$ is such that $U(f_B) < \infty$.
In this case we define $U'(p)$ to be $\sum_{B \in \mathcal{B}^M_p} size(B)$, where $size(B)$ is the number of trajectories with multiplicity in $B$, and $0$ if $\mathcal{B}^M_p$ is empty.

Let us show that $(L',U')$ has the properties we want.
The number of tokens in marking $M'$ at place $p \in P$ is the sum of the sizes of the bunches that start from $p$ in history $H_M$.
That is, $M'(p)=\sum_{B \in \mathcal{B}^M_p} size(B)$ which is exactly $U'(p)$ when $U'(p)$ is finite.
Thus for all $p \in P$, $M'(p) \leq U'(p)$ and $M'(p) \geq S'(p)=L'(p)$, so $M'$ is in $(L', U')$.

The construction from the Pruning Theorem ``prunes" history $H_M$ from $M'$ to $M$ into a well-structured realizable history $H_S$ from $S'$ to $S$ with the same set of non-empty bunches.
We are going to show that $(L',U') \subseteq \pre^*(L,U)$ by ``boosting" the bunches of history $H_S$ to create histories $H_R$ which will start in any marking $R'$ of $(L',U')$ and end at some marking $R$ in $(L,U)$.
For any constant $k \in \N$, a bunch $B$ of history $H_S$ is \emph{boosted by k} into a bunch $B'$ by selecting any trajectory $\tau$ in $B$ and augmenting its multiplicity by $k$ to create a new bunch $B'$ of size $size(B) + k$.

\begin{figure}
\vskip-0.5cm
\centering
\begin{tikzpicture}

\node[draw=none,fill=none] (Mp) {$M'$};%
\node[right=4 of Mp, draw=none,fill=none] (M) {$M$};%
\node[below=0.45 of Mp, draw=none,fill=none] (Np) {$R'$};%
\node[right=4 of Np, draw=none,fill=none] (N) {$R$};%
\node[below=0.45 of Np, draw=none,fill=none] (Sp) {$S'$};%
\node[right=4 of Sp, draw=none,fill=none] (S) {$S$};%
\node[below=0.0 of Mp, draw=none,fill=none] (small) {$\geq$};%
\node[below=0.0 of Np, draw=none,fill=none] (smal) {$\geq$};%
\node[below=0.0 of M, draw=none,fill=none] (sma) {$\geq$};%
\node[below=0.0 of N, draw=none,fill=none] (sm) {$\geq$};%
\node[right=0.1 of M, draw=none,fill=none] (U) {$\leq U$};%
\node[right=0.1 of S, draw=none,fill=none] (L) {$\geq L$};%
\node[left=0.1 of Mp, draw=none,fill=none] (Up) {$U'\geq$};%
\node[left=0.1 of Sp, draw=none,fill=none] (Lp) {$L'\leq$};%
%

\draw[->] (Mp) to node[below] {$H_M$} (M);%
\draw[->] (Np) to node[below] {$H_R$} (N);%
\draw[->] (Sp) to node[below] {$H_S$} (S);%

\end{tikzpicture}
\caption{}
\label{figure-boost}
\vskip-0.5cm
\end{figure}

Let $R'$ be a marking in $(L',U')$.
We construct a new history $H_R$ starting in $R'$, and we prove that its final place is in $(L,U)$.
What we aim to build is illustrated in Figure \ref{figure-boost}.
We initialize $H_R$ as the bunches of history $H_S$.
We call $\mathcal{B}^S_p$ the set of the bunches of $H_S$ starting in $p$.

For $p$ such that there is a bunch $B_S \in \mathcal{B}^S_p$ with infinite $U(f_{B_S})$, i.e. such that $\mathcal{B}^S_p$ is in \emph{Case 1} defined above, we take this bunch $B_S$ and boost it by $R'(p) - S'(p)$ into a new bunch $B_R$. 
Informally, we need not worry about exceeding the bound $U$ on the final place of the trajectories of $B_R$, because this place is $f_{B_S}$ and its upper bound is infinite.
The number of trajectories starting in $p$ in history $H_R$ is now $R'(p)$.

Otherwise, $p$ is such that $\mathcal{B}^S_p$ is in \emph{Case 2}, so we know that $R'(p)\leq M'(p)$ because $U'(p)$ was defined to be $M'(p)$. 
Each bunch in $\mathcal{B}^S_p$ in history $H_S$ has a corresponding bunch in history $H_M$ because the pruning operation never erases a bunch completely, it only diminishes its size.
We can boost all bunches in $\mathcal{B}^S_p$ to the size of the corresponding bunches in $H_M$ and not exceed the finite bounds of $U$ on the final places of these bunches.
We arbitrarily select bunches in $\mathcal{B}^S_p$ which we boost so that the sum of the size of bunches in $\mathcal{B}^S_p$ is equal to $R'(p)$.

Now by construction, history $H_R$ starts in marking $R'$, and it ends in a marking $R$ such that $S\leq R\leq U$, as every bunch is either boosted to a size no greater than it had in $H_M$, or leads to a place $p$ with $U(p)=\infty$.
Since $S \geq L$, this implies that $R \geq L$ and so $R \in (L,U)$ and $R' \in \pre^*(L,U)$.

Finally, we show that the norms of $(L',U')$ are bounded.
For the $L$-norm, we simply add up the tokens in $S=L'$. 
Thus by the Pruning theorem
\begin{align*}
\lnorm{(L',U')} \leq |L| + |P|^3 \leq \lnorm{(L,U)} + |P|^3.
\end{align*}
By definition of the $U$-norm, 
$\unorm{(L',U')} = \sum_{\substack{p\in P | U'(p)<\infty}} U'(p).$
If $U'(p)<\infty$ then $\mathcal{B}^M_p$ of history $H_M$ is in Case $2$ and there is no bunch $B\in \mathcal{B}^M_p$ going from $p$ to a final place $f_B$ such that $U(f_B)=\infty$.
So the set of bunches $B$ starting in a place $p$ such that $U'(p)<\infty$ is included in the set of bunches $B'$ such that $U(f_{B'})<\infty$, and thus
\begin{align*}
\sum_{\substack{p\in P | U'(p)<\infty}} U'(p) = \sum_{\substack{p\in P | U'(p)<\infty}} \left( \sum_{B \in \mathcal{B}^M_p} size(B) \right)
\leq \sum_{B | U(f_B)<\infty} size(B).
\end{align*}
Now $\sum_{B | U(f_B)<\infty} size(B)$ in history $H_M$ is exactly $\sum_{\substack{p\in P | U(p)<\infty}} M(p)$.
Since $M \in (L,U)$, for all places we have $M(p)\leq U(p)$ and so 
\begin{align*}
\sum_{\substack{p\in P | U'(p)<\infty}} U'(p) \leq \sum_{\substack{p\in P | U(p)<\infty}} M(p)
\leq \sum_{\substack{p\in P | U(p)<\infty}} U(p).
\end{align*}
So by definition of the norm, $\unorm{(L',U')} \leq \unorm{(L,U)}$. \qed
\end{proof}

This result entails the main theorem of the section.

\begin{restatable}{theorem}{ThmCCReachability}
\label{thm:new18}
Let $N$ be an IO net with a set $P$ of places, and let $S$ be a counting set.
Then $pre^*(S)$ is a counting set and there exist counting constraints $\Gamma$ and $\Gamma'$ satisfying $\sem{\Gamma} = S$, $\sem{\Gamma'} = \pre^*(S)$ and we can bound the norm of $\Gamma'$ by
 \begin{align*}
\unorm{\Gamma'} \leq \unorm{\Gamma} \text{ and }
\lnorm{\Gamma'} \leq  \lnorm{\Gamma} + |P|^3
 \end{align*}
The same holds for $\post^*$ by using the net with reversed transitions.
\end{restatable}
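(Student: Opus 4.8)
The plan is to reduce the statement to the single-cube case already handled by Lemma~\ref{lm:smallminterm}, and then reassemble using that $\pre^*$ distributes over unions together with the union clause of Proposition~\ref{prop:oponconf}.

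First I would fix a counting constraint $\Gamma = \bigcup_{i=1}^m C_i$ with $\sem{\Gamma} = S$, where each $C_i = (L_i,U_i)$ is a cube. Since $\pre^*$ distributes over unions, $\pre^*(S) = \bigcup_{i=1}^m \pre^*(C_i)$, so it suffices to show that each $\pre^*(C_i)$ is a counting set expressible by a counting constraint $\Gamma_i$ with $\unorm{\Gamma_i} \le \unorm{C_i} \le \unorm{\Gamma}$ and $\lnorm{\Gamma_i} \le \lnorm{C_i} + |P|^3 \le \lnorm{\Gamma} + |P|^3$. The constraint $\Gamma'$ for $\pre^*(S)$ is then the formal union of the $\Gamma_i$; since the $L$- and $U$-norms of a formal union of cubes are the maxima of the norms of the individual cubes (Proposition~\ref{prop:oponconf}, union clause), this yields $\unorm{\Gamma'} \le \unorm{\Gamma}$ and $\lnorm{\Gamma'} \le \lnorm{\Gamma} + |P|^3$.

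For a single cube $C = (L,U)$, the key observation is that there are only \emph{finitely many} cubes $(L',U')$ with $\lnorm{(L',U')} \le \lnorm{C} + |P|^3$ and $\unorm{(L',U')} \le \unorm{C}$: the finite entries of $L'$ and $U'$ at each place are bounded by these norms, and there are only $2^{|P|}$ choices for the set of places on which $U'$ equals $\infty$. Let $\mathcal{D}$ be the (finite) subcollection of those cubes that moreover satisfy $(L',U') \subseteq \pre^*(C)$. By Lemma~\ref{lm:smallminterm}, every $M' \in \pre^*(C)$ lies in some cube of $\mathcal{D}$, and conversely every cube of $\mathcal{D}$ is contained in $\pre^*(C)$ by definition; hence $\pre^*(C) = \bigcup_{D \in \mathcal{D}} D$ is a counting set, and the counting constraint $\bigcup_{D \in \mathcal{D}} D$ has the required norm bounds. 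Note that we only need \emph{existence} of such a constraint, so it is irrelevant that deciding membership in $\mathcal{D}$ may be costly.

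Finally, for $\post^*$ I would observe that reversing every transition $(p_s,p_o) \mapsto (p_d,p_o)$ of $N$ into $(p_d,p_o) \mapsto (p_s,p_o)$ again yields an IO net $N^R$ over the same place set, and that $\post^*_N(\mathcal{M}) = \pre^*_{N^R}(\mathcal{M})$ for every set of markings $\mathcal{M}$; applying the already-proven $\pre^*$ case to $N^R$ finishes the argument. I expect the only real work to be in the single-cube step — specifically, stating cleanly why the family of bounded-norm subcubes of $\pre^*(C)$ is finite and why its union is exactly $\pre^*(C)$ — as everything else is a routine combination of Lemma~\ref{lm:smallminterm} and Proposition~\ref{prop:oponconf}.
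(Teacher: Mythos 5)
Your proposal is correct and follows essentially the same route as the paper's proof: decompose $S$ into cubes, apply Lemma~\ref{lm:smallminterm} to write each $\pre^*(C_i)$ as a union of the finitely many bounded-norm subcubes it contains, take the formal union (whose norms are the maxima of the constituents'), and handle $\post^*$ via the reversed IO net. Your explicit justification of why only finitely many cubes satisfy the norm bounds (bounded finite entries plus $2^{|P|}$ choices of infinite components) is in fact slightly more careful than the paper's one-line assertion of this fact.
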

\begin{proof}[Sketch]
Lemma \ref{lm:smallminterm} gives ``small" cubes such that $\pre^*(S)$ is the union of these cubes.
Since there are only a finite number of such ``small" cubes, this union is finite and $\pre^*(S)$ is a counting set. 
The bounds on the norms of $\pre^*(S)$ are derived from the bounds on the norms of these cubes.
\end{proof}

\begin{remark}
Theorem \ref{thm:new18} is a dramatic improvement on Theorem 18 of \cite{EsparzaGMW18},
which could only give a much higher bound for $\lnorm{\Gamma'}$: \\
\centerline{$\lnorm{\Gamma'} \leq (\lnorm{\Gamma} + \unorm{\Gamma})^{2^{\mathcal{O}(|P|^2 \log |P|)}}$ instead of $\lnorm{\Gamma'} \leq  \lnorm{\Gamma} + |P|^3$.}
\end{remark}

\section{Cube Problems for IO Nets Are in \PSPACE}
We prove that the cube-reachability, cube-coverability, and cube-liveness problems for IO nets are in \PSPACE.

\begin{theorem}
\label{thm:set-seteasy}
The cube-reachability and cube-coverability problems for IO nets are in \PSPACE.
\end{theorem}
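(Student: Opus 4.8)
The plan is to reduce both problems to a single question of the form ``does $\pre^*(\mathcal{C}) \cap \mathcal{M}' \neq \emptyset$?'', to argue via Theorem~\ref{thm:new18} that a positive instance always has a witness marking of polynomially bounded size, and then to verify such a witness by a bounded reachability search in nondeterministic polynomial space.

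First I would set up the reduction. For cube-reachability, ``there are $M \in \mathcal{M}$, $M' \in \mathcal{M}'$ with $M' \trans{*} M$'' is by definition exactly ``$\mathcal{M}' \cap \pre^*(\mathcal{M}) \neq \emptyset$'', so I take $\mathcal{C} \defeq \mathcal{M}$. For cube-coverability, after first checking that $\mathcal{M}$ and $\mathcal{M}'$ are non-empty (componentwise $L \leq U$), I observe that $\mathcal{M}$ is coverable from $\mathcal{M}'$ iff some marking $M'' \geq L_{\mathcal{M}}$ is reachable from some $M' \in \mathcal{M}'$: if $M' \trans{*} M'' \geq M$ with $M \in \mathcal{M}$ then $M'' \geq M \geq L_{\mathcal{M}}$, and conversely $M \defeq L_{\mathcal{M}} \in \mathcal{M}$ works. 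Hence this is ``$\mathcal{M}' \cap \pre^*(\uparrow L_{\mathcal{M}}) \neq \emptyset$'', so I take $\mathcal{C}$ to be the cube $(L_{\mathcal{M}}, U_\infty)$ with $U_\infty(p)=\infty$ for every $p$. In both cases $\mathcal{C}$ is a cube, hence a counting set, with $\lnorm{\mathcal{C}} = \lnorm{\mathcal{M}}$.

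Next I would invoke Theorem~\ref{thm:new18} (equivalently, Lemma~\ref{lm:smallminterm}): $\pre^*(\mathcal{C})$ is a counting set $\sem{\Gamma'}$ with $\lnorm{\Gamma'} \leq \lnorm{\mathcal{C}} + |P|^3$, so every marking of $\pre^*(\mathcal{C})$ lies in a cube $(L_a,U_a) \subseteq \pre^*(\mathcal{C})$ with $\lnorm{(L_a,U_a)} \leq \lnorm{\mathcal{M}} + |P|^3$. From this I extract a small witness: if $M_1 \in \pre^*(\mathcal{C}) \cap \mathcal{M}'$, pick such a cube $(L_a,U_a)$ with $M_1 \in (L_a,U_a)$; writing $\mathcal{M}' = (L',U')$, the marking $M_0 \defeq \max(L_a,L')$ (componentwise) still lies in $(L_a,U_a) \cap \mathcal{M}'$, because $L_a \leq M_1 \leq U'$ and $L' \leq M_1 \leq U_a$ give $\max(L_a,L') \leq \min(U_a,U')$; and $|M_0| \leq \sum_{p}\bigl(L_a(p)+L'(p)\bigr) = \lnorm{(L_a,U_a)} + \lnorm{\mathcal{M}'} \leq B$, where $B \defeq \lnorm{\mathcal{M}} + |P|^3 + \lnorm{\mathcal{M}'}$. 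Thus $\pre^*(\mathcal{C}) \cap \mathcal{M}' \neq \emptyset$ iff there is a marking $M_0 \in \mathcal{M}'$ with $|M_0| \leq B$ and $M_0 \in \pre^*(\mathcal{C})$. Here $B$ is polynomial in the input size: $|P|^3$ is polynomial in $|P|$, and $\lnorm{\mathcal{M}}, \lnorm{\mathcal{M}'}$ are sums of at most $|P|$ numbers appearing in the input, hence have polynomially many bits.

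Finally I would give the algorithm: guess $M_0$ with $L_{\mathcal{M}'} \leq M_0 \leq U_{\mathcal{M}'}$ and $|M_0| \leq B$ (polynomial space), then verify $M_0 \in \pre^*(\mathcal{C})$ by a nondeterministic search that keeps only a current marking $M$ (initially $M_0$) and a step counter: repeatedly fire an enabled transition, and accept as soon as $M \in \mathcal{C}$ (for cube-reachability, test $L_{\mathcal{M}} \leq M \leq U_{\mathcal{M}}$; for cube-coverability, test $M \geq L_{\mathcal{M}}$). The reason this stays in polynomial space is that IO transitions consume and produce exactly two tokens, so the net is $\vec{1}$-conservative and $|M| = |M_0| \leq B$ holds throughout the run; there are at most $(B+1)^{|P|}$ such markings, so each $M$ is storable in polynomial space and a counter of $|P|\log(B+1)$ bits suffices to cap the search length. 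This is a nondeterministic polynomial-space procedure, and by Savitch's theorem it can be simulated in \PSPACE. The real content of the argument is Theorem~\ref{thm:new18} — it is precisely what forces $\pre^*(\mathcal{C})$ to have an $L$-norm only polynomially larger than that of $\mathcal{C}$, which is what makes the witness polynomially bounded; the main point that needs care (and the reason the raw Pruning Theorem does not suffice) is the interaction with the lower bound $L_{\mathcal{M}'}$, handled by the $\max(L_a,L')$ step above, together with routine bookkeeping that all quantities remain polynomial under a binary encoding of the cube bounds.
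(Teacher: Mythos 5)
Your proof is correct and follows essentially the same route as the paper's: both rely on Theorem~\ref{thm:new18} to bound the $L$-norm of the reachability closure, intersect with the other cube to get a polynomially bounded witness marking, and verify it by a token-count-preserving nondeterministic walk in polynomial space. The only differences are cosmetic — you work with $\pre^*$ of the target and guess a small source marking where the paper works with $\post^*$ of the source and guesses a small target marking, and you carry out the intersection norm bound by hand via $\max(L_a,L')$ instead of citing Proposition~\ref{prop:oponconf}.
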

\begin{proof}
Let us first consider cube-reachability.
Let $N$ be an IO net with set of places $P$, and let $S_0$ and $S$ be cubes.
Some marking of $S$ is reachable from some marking of $S_0$ if{}f $\post^*(S_0) \cap S \neq \emptyset$. 
Let $\Gamma_0$ and $\Gamma$ be two counting constraints for $S_0$ and $S$ respectively.
By Theorem~\ref{thm:new18}  and Proposition~\ref{prop:oponconf}, there exists a counting constraint $\Gamma'$ such that $\sem{\Gamma'}=\post^*(S_0) \cap S$, and such that $\unorm{\Gamma'} \leq \unorm{\Gamma_0} + \unorm{\Gamma}$ and $\lnorm{\Gamma'} \leq \lnorm{\Gamma_0} + |P|^3 + \lnorm{\Gamma}$.
Therefore, $\post^*(S_0) \cap S \ne \emptyset$ holds
if{}f \(\post^*(S_0) \cap S\) contains a  “small” marking $M$ satisfying \(|M| \leq  \lnorm{\Gamma_0} +  |P|^3 + \lnorm{\Gamma} \). 
The \PSPACE{} decision procedure takes the following steps:
\textsf{\textbf{1)}}    Guess a “small” marking $M \in S$.
\textsf{\textbf{2)}}  Check that \(M\) belongs to \(\post^*(S_0)\).

The algorithm for \textsf{\textbf{2)}} is to guess a marking \(M_0 \in S_0\) such that \(|M_0|=|M|\), and then guess a firing sequence  (step by step), leading from \(M_0\) to $M$. 
This can be performed in polynomial space because each marking along the path is of size \(|M|\), and we only need to store the current marking to check if it is equal to \(M\).

Now for cube-coverability. 
Again let $N$ be an IO net with set of places $P$, and let $S_0$ and $S$ be cubes.
In particular let $S=(L,U)$ for some upper and lower bounds $L,U$.
Some marking of $S$ is coverable from some marking of $S_0$ if{}f $\post^*(S_0) \cap S_{\infty} \neq \emptyset$, where $S_{\infty}$ is the cube defined by lower bound $L$ and upper bound $\infty$ on all places.
From here we proceed with the same \PSPACE{} decision procedure as above.
\qed
\end{proof}

Notice that cube-reachability and coverability can be extended to counting set-reachability and coverability simply by virtue of a counting set being a finite union of cubes.

Recall that a marking $M_0$ of an IO net $N$ is \emph{live} if for every marking $M$ reachable from $M_0$ and for every transition $t$ of $N$, some marking reachable from $M$ enables $t$.
The cube-liveness problem consists of deciding if, given a net $N$ and a cube ${\cal M}$ of markings of $N$, every marking of ${\cal M}$ is live.

\begin{theorem}
\label{thm:livepspace}
The cube-liveness problem for IO nets is in \PSPACE.
\end{theorem}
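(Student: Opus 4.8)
The plan is to characterize non-liveness as membership in an explicitly constructed counting set whose norms are polynomially bounded, and then reuse the guess-and-check scheme of Theorem~\ref{thm:set-seteasy}. First I recast liveness at the level of sets of markings. For a transition $t$, let $\mathrm{En}(t)$ be the cube of markings enabling $t$, namely the cube with lower bound $\preset{t}$ and upper bound $\infty$ on every place; it is a counting set with $L$-norm $\lnorm{\mathrm{En}(t)} = 2$ and $U$-norm $\unorm{\mathrm{En}(t)} = 0$. Transition $t$ is \emph{dead} at a marking $M$ precisely when no marking reachable from $M$ enables $t$, that is, when $M \notin \pre^*(\mathrm{En}(t))$; so the set $D_t$ of markings at which $t$ is dead is the complement of the counting set $\pre^*(\mathrm{En}(t))$. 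Unfolding the definition of liveness, a marking $M_0$ is \emph{not} live iff some marking reachable from $M_0$ lies in $D \defeq \bigcup_{t \in T} D_t$, i.e. $M_0 \in \pre^*(D)$. Hence $\mathcal{M}$ contains a non-live marking iff $\mathcal{M} \cap \pre^*(D) \neq \emptyset$, and since $\PSPACE$ is closed under complement it suffices to decide this non-emptiness question in $\PSPACE$.

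Next I bound the norms of a counting constraint for $\pre^*(D)$. Applying Theorem~\ref{thm:new18} to $\mathrm{En}(t)$ yields a constraint for $\pre^*(\mathrm{En}(t))$ with $U$-norm $\le \unorm{\mathrm{En}(t)} = 0$ and $L$-norm $\le \lnorm{\mathrm{En}(t)} + |P|^3 \le 2 + |P|^3$. Complementing via Proposition~\ref{prop:oponconf} gives a constraint for $D_t$ with $L$-norm $\le |P| \cdot 0 + |P| = |P|$ --- it is the vanishing $U$-norm that keeps this linear --- and $U$-norm $\le |P|(2 + |P|^3)$. Taking the union over $T$ (Proposition~\ref{prop:oponconf}) yields $L$-norm $\le |P|$ for $D$, and a final application of Theorem~\ref{thm:new18} gives a constraint $\Gamma'$ with $\sem{\Gamma'} = \pre^*(D)$ and $\lnorm{\Gamma'} \le |P| + |P|^3$. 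Intersecting with the cube $\mathcal{M}$ (Proposition~\ref{prop:oponconf}) gives a constraint for $\mathcal{M} \cap \pre^*(D)$ of $L$-norm at most $\lnorm{\mathcal{M}} + |P| + |P|^3$, which is polynomial in the input.

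The decision procedure is then the expected one. Since a nonempty counting set contains a marking of size at most its $L$-norm, $\mathcal{M}$ contains a non-live marking iff it contains one, say $M_0$, with $|M_0| \le \lnorm{\mathcal{M}} + |P| + |P|^3$; although this bound may be exponential, $M_0$ is a vector with polynomially many bits, exactly as in the proof of Theorem~\ref{thm:set-seteasy}. A nondeterministic polynomial-space procedure then guesses such an $M_0$ and checks $M_0 \in \mathcal{M}$; guesses, transition by transition, a firing sequence $M_0 \trans{*} M''$ --- all intermediate markings have size $|M_0|$ because IO transitions neither create nor destroy tokens, so the search stays inside a graph of polynomial-bit vertices; guesses $t \in T$; and finally checks that $t$ is dead at $M''$, i.e. that $M'' \notin \pre^*(\mathrm{En}(t))$. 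This last test is the complement of ``some marking reachable from $M''$ enables $t$'', which is decidable in nondeterministic polynomial space (guess the firing sequence step by step, every marking of size $|M''|$), hence in $\PSPACE$ by Savitch's theorem, so its complement is in $\PSPACE$ as well. Therefore ``$\mathcal{M}$ contains a non-live marking'' is decided by a nondeterministic polynomial-space procedure, hence lies in $\PSPACE$, and so does its complement, the cube-liveness problem.

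The step I expect to be the main obstacle is the norm bookkeeping of the second paragraph: liveness alternates quantifiers (for all reachable $M$, for all $t$, there is a reachable $M'$ enabling $t$), so after complementation one must control a nesting of the form $\pre^*\bigl(\overline{\pre^*(\cdot)}\bigr)$, and a careless bound would let the $L$-norm --- and hence the size of the witness $M_0$ --- blow up. It stays polynomial only because $\mathrm{En}(t)$ has $U$-norm $0$ and Theorem~\ref{thm:new18} never increases the $U$-norm, so the intermediate complement has $L$-norm only $|P|$.
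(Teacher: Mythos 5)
Your proposal is correct and follows essentially the same route as the paper: it characterizes the non-live markings as $\pre^*\bigl(\bigcup_{t\in T}\overline{\pre^*(En(t))}\bigr)$, uses Theorem~\ref{thm:new18} together with Proposition~\ref{prop:oponconf} to bound the norms of the resulting counting constraint polynomially (the key point being that $En(t)$ has $U$-norm $0$), and then decides emptiness of the intersection with $\mathcal{M}$ by the small-witness guess-and-check argument of Theorem~\ref{thm:set-seteasy}. The only difference is presentational: the paper delegates the final step to Theorem~\ref{thm:set-seteasy} as a counting-set reachability query, whereas you inline the procedure and spell out the norm arithmetic explicitly.
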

\begin{proof}
Let $N$ be an IO net with set of places $P$, and $\mathcal{M}$ a cube.
Let $t=(p_s,p_o)\mapsto (p_d,p_o)$ be a transition of $N$.
The set $En(t)$ of markings that enable $t$ contains the markings that put at least one token in $p_s$ and at least one token in $p_o$ (unless $p_s=p_o$ in which case there should be at least two tokens in that place). Clearly, $En(t)$ is a cube.
Then $\overline{\pre^*(En(t))}$ is the set of markings $M$ from which one cannot execute 
transition $t$ anymore by any firing sequence starting in $M$. So the set $\mathcal{L}$ of live markings of $N$ is given by
$$
\mathcal{L} = \overline{ \pre^*\left( \bigcup_{t \in T} \overline{\pre^*(En(t))} \right) }
$$
Deciding whether $\mathcal{M} \subseteq \mathcal{L}$ is equivalent to deciding whether  $\mathcal{M} \cap \overline{\mathcal{L}} = \emptyset$ holds, or, equivalently, whether
$\bigcup_{t \in T} \overline{\pre^*(En(t))}$ is reachable from $\mathcal{M}$. 
By definition, the cube describing $En(t)$ has an L-norm equal to $2$ and U-norm equal to $0$.
By Theorem~\ref{thm:new18}  and Proposition~\ref{prop:oponconf}, there exists a counting constraint $\Gamma'$ such that $\sem{\Gamma'}=\bigcup_{t \in T} \overline{\pre^*(En(t))}$ and its norms are of size polynomial in $|P|$.
So by Theorem \ref{thm:set-seteasy} this reachability problem can be solved in \PSPACE \ in the size of the input, i.e. net $N$ and set $\mathcal{M}$.
\qed
\end{proof}

\section{Application: Correctness of IO Protocols is \PSPACE-complete}
\label{Application}
In \cite{EsparzaGMW18}, Esparza \emph{et al.} studied the correctness problem for immediate observation protocols. The problem asks, given a protocol and a predicate, whether the protocol computes the predicate.  In order to study the complexity of the problem we need to restrict ourselves to a class of predicates representable by finite means. Fortunately, Angluin \textit{et al.} have shown in \cite{AAER07} that IO protocols compute exactly the predicates representable by counting constraints, i.e., the predicates $\varphi \colon \mathbb{N}^k \rightarrow \{0,1\}$ for which there is a counting constraint $\Gamma$  such that  $\varphi(\vec{v})=1$ if{}f $\vec{v}$ satisfies $\Gamma$. So we can formulate the problem as follows: given a counting constraint $\Gamma$ and an IO protocols with a suitable set of input states, does it compute the predicate described by $\Gamma$? It is shown in  \cite{EsparzaGMW18} that the problem is  \PSPACE-hard and in \EXPSPACE, and closing this gap was left for future research.

In Petri net terms, the correctness problem for IO nets asks, given an IO net $N$ and a counting constraint $\Gamma$, whether $N$ computes $\Gamma$ (formally defined in Section \ref{IOPPprimer}). We use the Pruning Theorem and the results of this paper to show that the correctness problem for IO nets, and so for IO protocols, is \PSPACE-complete.

We present a proposition that characterizes the nets $N$ that compute a given predicate $\varphi \colon \mathbb{N}^k \rightarrow \{0,1\}$. On top of the definitions of Section \ref{IOPPprimer}, we need some notations. For $b \in \{0,1\}$:
\begin{itemize}
\item $\mathcal{I}_{b} = \{\Mv \mid \varphi(\vec{v})=b\}$, i.e., $\mathcal{I}_{1}$ ($\mathcal{I}_{0}$) denotes the initial markings of $N$ for the input vectors satisfying (not satisfying) $\varphi$.
\item $\mathcal{C}_b$ denotes the set of $b$-consensuses of $N$. 
\item $\mathcal{ST}_b \defeq \overline{\pre^*\left(\overline{\mathcal{C}_b}\right)}$ denotes the set of stable consensuses of $N$ (the complement of the markings from which one can reach a non-$b$-consensus).
\end{itemize}

\begin{restatable}{proposition}{PropCharacterization}
\label{prop:correctness}
Let $N$ be an IO net, let $I$ be a set of input places, and let $\varphi: \mathbb{N}^{k} \rightarrow \{0,1\}$ be a predicate where $k=|I|$.
Net $N$ computes $\varphi$ if{}f $\post^*(\mathcal{I}_{b}) \subseteq \pre^*(\mathcal{ST}_b )$ holds for $b \in \{0,1\}$.
\end{restatable}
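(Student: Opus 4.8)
The plan is to reduce the statement to one fact about fair runs and then chase inclusions using the definitions of $\mathcal{I}_b$, $\mathcal{C}_b$, $\mathcal{ST}_b$. The key observation I would isolate first is: \emph{a fair firing sequence of $N$ (starting anywhere) converges to $b$ if{}f it visits some marking of $\mathcal{ST}_b$.} The ``if'' direction is immediate from the definition of $\mathcal{ST}_b = \overline{\pre^*(\overline{\mathcal{C}_b})}$: once a marking $M' \in \mathcal{ST}_b$ is visited, every later marking of the sequence is reachable from $M'$, hence a $b$-consensus, so the sequence converges to $b$. For the ``only if'' direction I would split on whether the fair sequence is finite or infinite. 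If it is finite it ends at a deadlock $M_n$, which by convergence is a $b$-consensus; since nothing but $M_n$ is reachable from $M_n$, we get $M_n \in \mathcal{ST}_b$. If it is infinite, I invoke Fact~\ref{fact:cons}: $N$ is $\vec{1}$-conservative, so all markings along the run have the same size and there are only finitely many of them, whence some marking $M^*$ is visited infinitely often. The fairness condition then propagates: if a marking is visited infinitely often and $M \trans{t} M'$ from it, then $M'$ is visited infinitely often, so by induction \emph{every} marking reachable from $M^*$ is visited infinitely often. Since from some point on all visited markings are $b$-consensuses, no marking reachable from $M^*$ is a non-$b$-consensus, i.e.\ $M^* \in \mathcal{ST}_b$. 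I would also record two trivialities used throughout: $\mathcal{ST}_b$ is closed under reachability (if $M' \in \mathcal{ST}_b$ and $M' \trans{*} M$ then $M \in \mathcal{ST}_b$), and $\pre^*, \post^*$ are reflexive, so $\mathcal{ST}_b \subseteq \pre^*(\mathcal{ST}_b)$ and $\mathcal{I}_b \subseteq \post^*(\mathcal{I}_b)$.

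For the direction ``$N$ computes $\varphi \Rightarrow$ the inclusions hold'': fix $b$ and $M \in \post^*(\mathcal{I}_b)$, say $M_{\vec{v}} \trans{\sigma} M$ with $\varphi(\vec{v}) = b$. Extend $\sigma$ to a fair firing sequence $\rho$ (recall that every finite firing sequence extends to a fair one, e.g.\ by a round-robin scheduling of the transitions). Since $N$ computes $\varphi$, $\rho$ converges to $b$, so by the key observation $\rho$ visits some $M' \in \mathcal{ST}_b$, say at a position $j$. If $j$ is at or after the occurrence of $M$ in $\rho$, then $M \trans{*} M'$, so $M \in \pre^*(\mathcal{ST}_b)$. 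Otherwise $M' \trans{*} M$, and then $M \in \mathcal{ST}_b$ by closure under reachability, so again $M \in \pre^*(\mathcal{ST}_b)$.

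For the converse, assume $\post^*(\mathcal{I}_b) \subseteq \pre^*(\mathcal{ST}_b)$ for $b \in \{0,1\}$, fix $\vec{v}$, set $b = \varphi(\vec{v})$, and let $\rho$ be any fair firing sequence starting at $M_{\vec{v}}$; I must show $\rho$ converges to $b$. If $\rho$ is finite, its last marking $M_n$ is a deadlock lying in $\post^*(\mathcal{I}_b) \subseteq \pre^*(\mathcal{ST}_b)$; being a deadlock it can reach only itself, so $M_n \in \mathcal{ST}_b$ is a $b$-consensus and $\rho$ converges to $b$. If $\rho$ is infinite, pick $M^*$ visited infinitely often (conservativeness again); then $M^* \in \post^*(\mathcal{I}_b) \subseteq \pre^*(\mathcal{ST}_b)$, so $M^* \trans{*} M'$ for some $M' \in \mathcal{ST}_b$, and by the propagation argument $M'$ is also visited infinitely often; from its first occurrence onward every marking of $\rho$ is reachable from $M' \in \mathcal{ST}_b$, hence a $b$-consensus, so $\rho$ converges to $b$. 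This settles both directions.

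The main obstacle is the ``only if'' half of the key observation: translating the semantic statement ``the fair run converges to $b$'' into the structural statement ``the run visits $\mathcal{ST}_b$''. The crux is that $\vec{1}$-conservativeness forces an infinite fair run to settle into a finite set of recurrent markings, each visited infinitely often and closed under reachability, so that convergence to $b$ pins this recurrent set inside $\mathcal{ST}_b$; the finite/deadlock case must be handled separately. Everything after this observation is routine manipulation of $\pre^*$, $\post^*$ and the definition of $\mathcal{ST}_b$; the only other point needing care is the case split, in the first direction, on whether the stable-consensus witness appears before or after $M$ along the extended fair run.
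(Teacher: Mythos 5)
Your proof is correct and follows essentially the same route as the paper's: reduce ``$N$ computes $\varphi$'' to the condition that every marking in $\post^*(\mathcal{I}_b)$ can reach $\mathcal{ST}_b$, and establish the equivalence by extending finite firing sequences to fair ones and analyzing the recurrent markings of a fair run. The only difference is that where the paper invokes Lemma~21 of \cite{EsparzaGMW18} for the key recurrence fact, you reprove it inline from $\vec{1}$-conservativeness (finitely many reachable markings, hence a recurrent one) and the fairness condition (recurrence is closed under reachability) -- a self-contained filling-in of the same argument rather than a different one.
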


We can now show:

\begin{restatable}{theorem}{ThmMain}
\label{thm:main}
The correctness problem for IO nets is \PSPACE-complete.
\end{restatable}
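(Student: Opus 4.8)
The plan is to prove both directions of the complexity bound. For \PSPACE-hardness, I would invoke Theorem~\ref{thm:liveness-hard}: the reachability and coverability problems for IO nets are already \PSPACE-hard, and a routine reduction encodes such a problem as a correctness instance (for example, by attaching a gadget that broadcasts a fixed consensus value exactly when a target marking has been covered, so that the protocol computes a trivial predicate if{}f the target is \emph{not} coverable). Since the hardness direction is standard and inherited from the non-parameterized lower bound, the real content is membership in \PSPACE.

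For membership, the key tool is Proposition~\ref{prop:correctness}: $N$ computes $\varphi$ if{}f $\post^*(\mathcal{I}_b) \subseteq \pre^*(\mathcal{ST}_b)$ for both $b \in \{0,1\}$. I would first observe that each of the four ingredient sets is a counting set whose defining counting constraint has norm polynomial in the size of the input. Indeed: $\mathcal{I}_b$ is obtained by intersecting the input cube (all input places unbounded, all others zero) with the counting set for $\varphi$ (resp.\ its complement) and restricting to vectors of size $\geq 2$; by Proposition~\ref{prop:oponconf} its norms are polynomial. The consensus sets $\mathcal{C}_b$ are cubes (a $b$-consensus puts zero tokens on every place with output $\neq b$), hence have $L$-norm $0$ and $U$-norm $0$. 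Now apply Theorem~\ref{thm:new18} and Proposition~\ref{prop:oponconf} repeatedly: $\overline{\mathcal{C}_b}$ has polynomial norms (complementation blows up the norm by a factor of $|P|$), $\pre^*(\overline{\mathcal{C}_b})$ adds only $|P|^3$ to the $L$-norm, another complementation gives $\mathcal{ST}_b$ with polynomial norms, and a final $\pre^*$ gives $\pre^*(\mathcal{ST}_b)$ with polynomial norms. Similarly $\post^*(\mathcal{I}_b)$ has polynomial norms by Theorem~\ref{thm:new18} applied to the reversed net.

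Having established that both $\post^*(\mathcal{I}_b)$ and $\pre^*(\mathcal{ST}_b)$ are counting sets with counting constraints of polynomially bounded norm, the containment $\post^*(\mathcal{I}_b) \subseteq \pre^*(\mathcal{ST}_b)$ is equivalent to emptiness of $\post^*(\mathcal{I}_b) \cap \overline{\pre^*(\mathcal{ST}_b)}$, which is again a counting set with polynomial norms. Emptiness of a counting set reduces to searching for a witness marking of size bounded by its $L$-norm, i.e.\ polynomial size; membership of a candidate marking $M$ in the various $\pre^*$ and $\post^*$ sets is checked by nondeterministically guessing and walking a firing sequence step by step, storing only the current marking (whose size is preserved along IO-net firing sequences, since IO nets are conservative), exactly as in the proof of Theorem~\ref{thm:set-seteasy}. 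This puts the whole decision in \NPSPACE${} = {}$\PSPACE by Savitch.

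The main obstacle is bookkeeping rather than conceptual: one must verify that every operation in the expression $\overline{\pre^*(\overline{\mathcal{C}_b})}$ and in the final intersection keeps the norms polynomial, which requires chaining the norm bounds of Theorem~\ref{thm:new18} (additive $|P|^3$ on the $L$-norm, no change to the $U$-norm) with those of Proposition~\ref{prop:oponconf} (the complement multiplies norms by $n = |P|$, so nesting two complements costs a factor $|P|^2$, still polynomial). Care is also needed at the boundary: the restriction $|\vec{v}| \geq 2$ on initial markings and the handling of the $p_s = p_o$ corner case in enabledness are harmless but must be noted. A subtle point worth spelling out is \emph{why} Proposition~\ref{prop:correctness} holds — that convergence in \emph{all} fair runs is captured by the two reachability-style inclusions — but since that proposition is already stated in the excerpt I would simply cite it and concentrate the write-up on the norm accounting and the reduction to Theorem~\ref{thm:set-seteasy}.
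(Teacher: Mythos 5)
Your membership-in-\PSPACE{} argument is essentially the paper's own proof: cite Proposition~\ref{prop:correctness}, rewrite the inclusion as emptiness of $\post^*(\mathcal{I}_b) \cap \overline{\pre^*(\mathcal{ST}_b)}$, chain the norm bounds of Theorem~\ref{thm:new18} and Proposition~\ref{prop:oponconf} (your accounting of the two nested complements around $\pre^*(\overline{\mathcal{C}_b})$ is correct), and then invoke the small-witness argument of Theorem~\ref{thm:set-seteasy}. Nothing to add there.

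The gap is in the hardness direction, which you dismiss as ``standard and inherited from the non-parameterized lower bound.'' It is not: correctness is a \emph{parameterized} property, quantified over the entire infinite cube of initial configurations (arbitrarily many agents in each input place), whereas the coverability instance you want to encode concerns a \emph{single} initial marking. Your proposed gadget --- broadcast a consensus value once a target marking is covered, so that the protocol computes a trivial predicate iff the target is not coverable --- only controls the behaviour of the one intended initial configuration. All the unintended initial configurations (e.g.\ those placing two tokens on a head place, or zero tokens on some cell place, so that the Turing-machine simulation is malformed) must \emph{also} converge to the value prescribed by the predicate, in every fair run; otherwise the protocol is incorrect for reasons unrelated to the coverability answer and the ``iff'' breaks. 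The paper's reduction (in the appendix, building on the construction of Theorem~\ref{thm:liveness-hard}) resolves exactly this: it adds transitions that detect duplicated head/cell/observer tokens and funnel a token into the $success$ trap, and it chooses the predicate to be $1$ precisely on those malformed inputs, so that every configuration outside the intended one-token-per-input-place regime provably stabilizes to the right answer. Without an argument of this kind --- i.e.\ without specifying the predicate and showing convergence on the whole input cube --- your reduction does not establish \PSPACE-hardness of correctness; it only re-establishes hardness of coverability.
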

\begin{proof}
Let $N$ be an IO net with \(P\) its set of places, $I$ a set of input places of size $k$, and $\varphi \colon \mathbb{N}^k \rightarrow \{0,1\}$ a predicate described by some counting constraint $\Gamma_{\varphi}$.
Recall that \(\mathcal{ST}_b\) is given by \(\overline{ \pre^*( \overline{ \mathcal{C}_b } ) }\) where $\mathcal{C}_b$, for $b \in \{ 0,1 \}$, can be represented by the cube defined by the upper bound equal to $0$ on all places $p_i\in O^{-1}(1-b)$ and $\infty$ otherwise, and the lower bound equal to $0$ everywhere. 
The condition for correctness of Proposition \ref{prop:correctness} can be rewritten as 
\begin{equation}
\label{correct}
\post^*(\mathcal{I}_{b}) \cap \overline{\pre^*(\mathcal{ST}_b )} = \emptyset.
\end{equation}
Deciding (\ref{correct}) is equivalent to deciding whether $\overline{\pre^*(\mathcal{ST}_b )}$ is reachable from $\mathcal{I}_{b}$.
The cube describing $\mathcal{C}_b$ has upper and lower norm equal to $0$.
By Theorem~\ref{thm:new18}  and Proposition~\ref{prop:oponconf}, there exists a counting constraint $\Gamma_b$ such that $\sem{\Gamma_b}=\overline{\pre^*(\mathcal{ST}_b )}$ and its norms are of size polynomial in $|P|$.
Set $\mathcal{I}_{b}$ is a counting set described by either $\Gamma_{\varphi}$ or its complement.
So by Theorem \ref{thm:set-seteasy} this reachability problem can be solved in \PSPACE.

The proof for \PSPACE-hardness reduces from the acceptance problem for 
deterministic Turing machines running in linear space, and is in the full version \cite{EsparzaRW19}.
\qed
\end{proof}

\section{Conclusion}
Many modern distributed systems are parameterized, and they have to be modeled as an infinite set of Petri nets differing only in their initial markings. This leads to a new class of \emph{parameterized} analysis problems, which typically are much harder to solve that standard ones.  We have shown that, remarkably, this is not the case for immediate observation Petri nets, a subclass of $\vec{1}$-conservative nets  able to model immediate observation protocols and enzymatic chemical reaction networks. We have proved that the parameterized reachability, coverability, and liveness problems are \PSPACE-complete, which is also the complexity of their non-parameterized versions. Current research on population protocols or networks considers quantitative properties like, in the case of population protocols, the computation of the expected time to stabilization. In future research we plan to study algorithms for these questions.

\medskip

\noindent \textbf{Acknowledgments.} We thank three anonymous reviewers for numerous suggestions to improve readability, and Pierre Ganty for many helpful discussions.

%

\appendix
\section{Appendix for Section \ref{Parameterized}}
\ThmConservativeNets*
\begin{proof}
The first part is proved (modulo straightforward modifications) in \cite{ChengEP95,Esparza96}. For the second part, let $N=(P,T,F)$ be an arbitrary Petri net. We construct a Petri net $N'=(P \cup \{r,s\}, T, F')$, where $r$ and $s$ are two new places, the \emph{repository} and the \emph{sink}, and $F'$ is defined so that, intuitively,
transitions of $N'$ neither create nor destroy tokens. Formally, for every transition $t$:
\begin{itemize}
\item $F'(p, t)=F(p,t)$ and $F'(t, p)=F(t,p)$ for every $p \in P$.
\item $F'(t, r)=0$, and $F'(r, t) = \max\{ 0,  |\postset{t}|-|\preset{t}|\}$.
\item $F'(s, t)=0$, and $F'(t, s) = \max\{ 0,  |\preset{t}|-|\postset{t}|\}$.
\end{itemize}
In $N'$ we have $\sum_{p \in P \cup \{r,s\}} F'(p, t) = \sum_{p \in P \cup \{r,s\}} F'(t, p)$ for every transition $t$, and so $N'$ is conservative.

Given a marking $M$ of $N$, let $(L_M, U_M)$ be the cube of $N'$ given by $L_M(p) = M(p) = U_M(p)$ for every $p \in P$, $L_M(r)=L_M(s)=0$ and $U_M(r)=U_M(s)=\infty$. Clearly, we have: $M_2$ is reachable (coverable) from $M_1$ in $N$ if{}f $(L_{M_2},U_{M_2})$ is reachable (coverable) from $(L_{M_1},U_{M_1})$ in $N'$, and we are done.
\qed
\end{proof}

\section{Appendix for Section \ref{IO-nets-hardness}}

To remind the notation, let us start with an illustration of transitions modelling a single step.

\begin{figure}
\centering
\includegraphics[width=10cm]{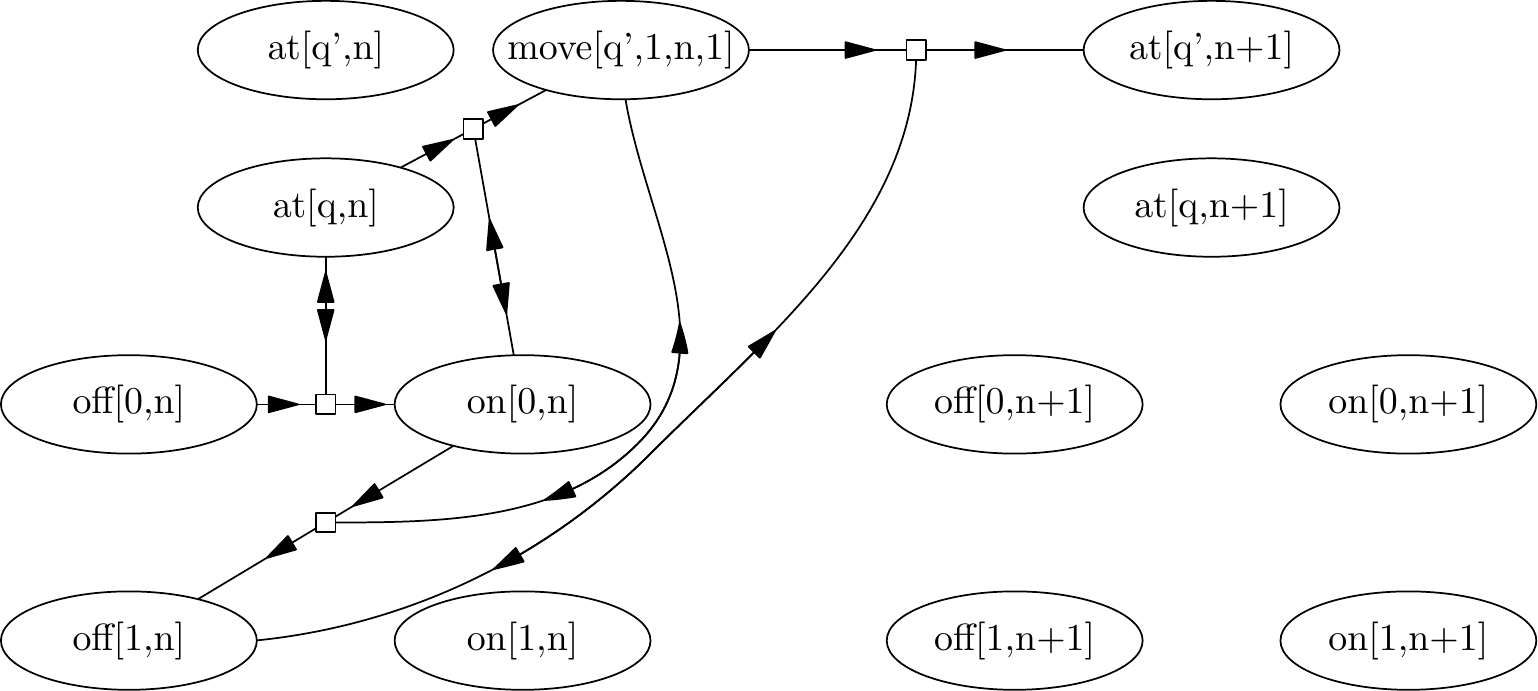}
\caption{Some of the places and transitions involved in modelling a Turing machine}
\label{figure-turing-io}
\end{figure}

Figure~\ref{figure-turing-io} illustrates transitions involved in modelling a single step of a Turing machine
that reads $0$, writes $1$, moves head to the right and switches the control state from $q$ to $q'$.

\begin{definition}
 A marking of $N_M$ is a \emph{modelling marking} if the following conditions hold.
\begin{enumerate}
        \item For every $1\leq n\leq K$ exactly one of the $2|\Sigma|$ places 
        $\act{\sigma}{n}, \pass{\sigma}{n}$ is marked, and marked with a single token. \\
        (Intuitively: every cell is either \textit{on} or \textit{off} and contains exactly one symbol.)
        \item Exactly one of all the head places is marked (again, with a single token).
        \item If a cell place $\act{\sigma}{n}$ is marked, then a head place $\stable{q}{n}$ or $\switch{q}{\sigma'}{n}{d}$ is marked for some $\sigma'$ and $d$.
        \item If a head place $\switch{q}{\sigma}{n}{d}$ is marked,
                either $\act{\sigma'}{n}$ is marked for some $\sigma'$,
                of $\pass{\sigma}{n}$ is marked.
\end{enumerate}
\end{definition}

\begin{remark}
Note that for every configuration $c$ of $M$ the marking $M_c$ is a modelling marking.
\end{remark}

\begin{lemma}
        \label{lemma:modellingmarkingevolution}
For every modelling marking $M$ of $N_M$:
\begin{itemize}
\item[(1)] $M$ enables at most one transition.
\item[(2)] If $M$ enables no transitions, then it marks places $\act{\sigma}{n}$ and $\stable{q}{n}$ for some $q \in Q$, $\sigma \in \Sigma$, and $1 \leq n \leq K$.
\item[(3)] If $M \trans{} M'$, then $M'$ is also a modelling marking. 
\end{itemize}
\end{lemma}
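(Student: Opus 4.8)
The plan is to derive all three claims from one structural fact about modelling markings: the cell that is \emph{on}, if any, sits exactly under the unique marked head place. First I would establish this. Let $M$ be a modelling marking and $h$ its unique marked head place (condition~(2)). If $h = \stable{q}{n}$, then by condition~(3) the only place of the form $\act{\cdot}{\cdot}$ that can be marked is $\act{\sigma}{n}$ for a single $\sigma$, and by condition~(1) exactly one of ``cell $n$ is \emph{on}'' (some $\act{\sigma}{n}$ marked) and ``cell $n$ is \emph{off}'' (some $\pass{\sigma}{n}$ marked) holds, with the symbol determined. If $h = \switch{q}{\sigma'}{n}{d}$, then by condition~(3) again the only markable $\act{\cdot}{\cdot}$ place is $\act{\sigma''}{n}$ for a single $\sigma''$, and by condition~(4) exactly one of ``cell $n$ is \emph{on}'' and ``$\pass{\sigma'}{n}$ is marked'' holds. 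In either case at most one cell is \emph{on}, and it is cell $n$.

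For part~(1): every transition of $N_M$ has a preset consisting of one head place and one distinct cell place, so an enabled transition must use $h$ as its head component. If $h = \stable{q}{n}$, the only transitions with $\stable{q}{n}$ in the preset are of type \textbf{1a} (which also needs $\pass{\sigma}{n}$) or \textbf{2a} (which needs $\act{\sigma}{n}$); by the preliminary fact at most one of these cell places is marked and the symbol is fixed, so at most one transition is enabled. If $h = \switch{q}{\sigma'}{n}{d}$, the only transitions with $\switch{q}{\sigma'}{n}{d}$ in the preset are of type \textbf{1b} (needing some $\act{\sigma''}{n}$) or \textbf{2b} (needing $\pass{\sigma'}{n}$); again exactly one of these alternatives holds, so at most one transition is enabled. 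For part~(2): suppose $M$ enables no transition. If $h = \switch{q}{\sigma'}{n}{d}$, then, since the existence of this place forces $1 \le n+d \le K$ so that the matching type \textbf{1b} and \textbf{2b} transitions exist, condition~(4) makes one of them enabled, a contradiction; hence $h = \stable{q}{n}$. If moreover cell $n$ were \emph{off}, the always-defined type \textbf{1a} transition would be enabled, again a contradiction; so cell $n$ is \emph{on}, i.e.\ $M$ marks $\stable{q}{n}$ and some $\act{\sigma}{n}$.

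For part~(3): by part~(1) the fired transition $t$ is the unique enabled one, and I would verify the four modelling conditions for each of its four possible types. Conditions~(1) and~(2) are immediate: types \textbf{1a} and \textbf{1b} move the single token of one cell within that cell's own places, and types \textbf{2a} and \textbf{2b} move the single head token between head places, leaving everything else fixed. For conditions~(3) and~(4) I would invoke the preliminary fact: before firing, the only markable $\act{\cdot}{\cdot}$ place is the one under $h$, so after firing the \emph{on} cell, if any, again coincides with the new head position. Concretely, after \textbf{1a} the head is still $\stable{q}{n}$ and cell $n$ is now \emph{on}, witnessing~(3); after \textbf{2a} the head becomes a $\switch{\cdot}{\cdot}{n}{\cdot}$ place over the still-\emph{on} cell $n$, witnessing~(3) and~(4); after \textbf{1b} cell $n$ becomes \emph{off} while the marked $\switch{q}{\sigma'}{n}{d}$ now has $\pass{\sigma'}{n}$ marked, witnessing~(4), and no $\act{\cdot}{\cdot}$ place is marked so~(3) is vacuous; after \textbf{2b} the head becomes $\stable{q}{n+d}$ and no $\switch{\cdot}{\cdot}{\cdot}{\cdot}$ place is marked, so~(4) is vacuous.

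The only real work is the bookkeeping in part~(3): one must track which symbol labels the \emph{on}/\emph{off} cell and which control-or-move head place is present, before and after firing, and confirm that the ``handshake'' clauses~(3) and~(4) relating the cell under the head to the head place are re-established in each of the four cases. The preliminary observation --- at most one cell is ever \emph{on}, and it coincides with the head --- is exactly what reduces this to a short case check rather than a delicate argument, and it is also what makes parts~(1) and~(2) nearly immediate.
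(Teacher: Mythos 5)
Your proof is correct and follows essentially the same route as the paper's: a direct case analysis on the unique marked head place and the unique marked cell place beneath it, with part (1) from the fact that every preset is one head place plus one cell place at the same position, part (2) from checking that \textit{move} places and \textit{off} cells under an \textit{at} place always admit a transition, and part (3) by tracking the two moved tokens through the four transition types. Your explicit preliminary observation that at most one cell is \emph{on} and that it sits under the head is only implicit in the paper's terser argument, but the substance is identical.
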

\begin{proof}
\noindent (1) All possible transitions require tokens at two places,
        one of type $\act{\cdot}{n}$ or $\pass{\cdot}{n}$
        and one of type $\stable{\cdot}{n}$ or $\switch{\cdot}{n}{\cdot}{\cdot}$, with the same $n$.
        But the modelling condition requires that there can be at most one such pair.

\noindent (2) If a $\switch{\cdot}{\cdot}{\cdot}{\cdot}$ place is marked,
        a transition is always possible by definition
        of the list of $\switch{\cdot}{\cdot}{\cdot}{\cdot}$ places.
        The same for the case where a $\stable{\cdot}{\cdot}$ place
        is marked but no $\act{\cdot}{\cdot}$ case is marked.
        If there are marked places of types $\act{\cdot}{n}$ and $\stable{\cdot}{n}$,
        the transition may fail
        to exist if either the Turing machine halts or if it goes outside the allocated space.

\noindent (3) Every transition consumes and produces one token 
        at $\pass{\cdot}{n}$ or $\act{\cdot}{n}$ place,
        and the new place has the same $n$.
        Every transition consumes and produces one token
        at $\switch{\cdot}{\cdot}{\cdot}{\cdot}$ or $\stable{\cdot}{\cdot}$ place.
        If an $\act{\cdot}{n}$ place becomes marked after a transition,
        it has the same $n$ as the marked $\stable{\cdot}{n}$ place
        of both markings (before and after);
        if an $\act{\cdot}{n}$ place stays marked,
        the token is moved from a $\stable{\cdot}{n}$ to a $\switch{\cdot}{n}{\cdot}{\cdot}$
        place with the same $n$.
When $\switch{q}{\sigma}{n}{d}$ becomes marked,
        the transition needs a marked $\act{\cdot}{n}$ place.
        When $\switch{q}{\sigma}{n}{d}$ stays marked,
        the transition marks a $\pass{\sigma}{n}$ place.
        \qed
\end{proof}

\theoremSimulationStep*
\begin{proof}
        By Lemma~\ref{lemma:modellingmarkingevolution},
        for all $c$ there is either zero or one possibility for the
sequence $t_1,t_2,t_3,t_4$ starting in $M_c$. 
        It is easy to see from the definition of steps marking $\switch{\cdot}{\cdot}{\cdot}{\cdot}$
        places that if such a sequence exists, it results in $c'$ such that $c\trans{}c'$.
        If such a sequence doesn't exist, the failure must occur when trying to mark
        a  $\switch{\cdot}{\cdot}{\cdot}{\cdot}$ place. 
        In that case the configuration~$c$ must be blocked, either by
        the transition being undefined or by going out of bounds.
\end{proof}

\ThmReachabilityHard*
\begin{proof}
The proof is routine. Let $p$ be a fixed polynomial satisfying $p(n) \geq n$ for all $n$. 
Consider the set of deterministic Turing machines whose set of states contains two distinct distinguished states $q_{acc}, q_{rej}$,  and whose computation on empty tape satisfies the following conditions:
\begin{itemize}
\item The computation never visits a configuration that visits more than $p(n)$ cells, where $n$ is the size of $M$,
and visits the set $\{q_{acc}, q_{rej}\}$ of states exactly once.
\item The computation ends in a configuration $c$ with empty tape, head on the first cell, and control state either $q_{acc}$ or $q_{rej}$. 
\end{itemize}
We say that the machine \emph{accepts} (\emph{rejects}) if it terminates in $q_{acc}$ ($q_{rej}$). 
It is well known that the problem whether such a machine accepts on empty tape is \PSPACE-hard. Given such a machine $M$, let $N_M$ be its associated IO net, and let $M_0$ and $M$ be the modeling markings describing the initial configuration and the unique accepting configuration.  Then $M$ accepts if{}f $M$ is reachable from $M_0$ if{}f some marking reachable from $M_0$ covers the marking that puts a token in the place for $q_{acc}$.

Now we reduce termination of bounded-tape Turing machines to liveness of 
immediate-observation Petri nets.
        Consider a Turing machine $M$ with the accepting state $q_{acc}$. 
First, we convert it to an immediate-observation
Petri net as before. Afterwards, we add two additional places, $observer$
        and $success$. We add the following transitions: 
        \begin{itemize}
	\item$(\stable{q_{acc}}{n},observer) \mapsto (\stable{q_{acc}}{n},success$), and
	\item$(success, *) \mapsto (success, *)$.
	\end{itemize}
        Initially, we place the tokens according to the initial control state
        and tape contents, and additionally put one token into observer.
        Now, if the Turing machine cannot reach the accepting state, the net
        will never be able to execute any transition into $success$
        (so it will not be live).
        If the Turing machine can reach the accepting
        state, the only possible sequence of transitions of the net will
        lead to marking of some place $\stable{q_{acc}}{n}$. Afterwards, 
        the net can optionally switch a tape state from passive to active, 
        but cannot continue further without activating a transition that 
        marks the $success$ place. 
        
        As our Petri net contains at least two
        other tokens, and as $success$ place is such a trap that marking 
        it allows moving tokens between any two places, firing
        this transition makes it possible to mark two arbitrary places
        from any later marking, which allows to fire any transition.
        Therefore if the Turing machine reaches the accepting state, the Petri net 
        is live.

        We have proven the reduction of the acceptance problem 
        for Turing machines running in linear space to 
        liveness of immediate-observation Petri nets, which implies
        \PSPACE-hardness.
        \qed
\end{proof}

\section{Appendix for Section \ref{Pruning}}

\LemmaRealizable*
\begin{proof}
One direction is obvious by definition: if we have a realizable history (even not well-structured),
it also describes a firing sequence.
Let us prove the other direction.

Informally, we just implement the de-anonymisation.
A formal proof can be given by induction in the number of transitions in the firing sequence.

\noindent
\emph{Base case}. If there are no transitions, we create a multiset of trajectories of length one such that the initial places of the trajectories are exactly the places (with multiplicity) of the initial marking of the firing sequence.
This is well-structured because there are no steps.

\noindent
\emph{Induction step}. Consider a sequence of transitions and a corresponding well-structured $N$-history.
Now let us add a single enabled transition. 
To build the new history, we choose an arbitrary trajectory of the existing history such that this trajectory ends in the place corresponding to the source place of the added transition.
Such a trajectory exists because the transition is enabled and therefore its source place must be marked.
We extend the chosen trajectory with a step from the source place to the destination place of the added transition,
and we extend the rest of the trajectories with one horizontal step each.
We obtain a multiset of trajectories of same length, thus constituting a history.
It is realizable using the considered sequence of transitions followed by the new enabled transition.
As we add only a single non-horizontal step at that moment of time, we cannot
break the well-structuring condition.
\qed
\end{proof}

\LemmaHistory*
\begin{proof} 
Let a well-structured history $H$ be realizable in $N$.
Consider an arbitrary non-horizontal step $\tau(i) \tau(i+1)=p_s p_d$ in some trajectory of this history.
All the non-horizontal steps at the corresponding position in $H$ are equal by well-structuredness,
and realizability implies that there is an enabled transition
with source place $p_s$ and destination place $p_d$ at marking $M_H^i$ in $N$.
This transition can be applied as many times as there are equal steps 
at the corresponding position in $H$.
Therefore the observed place $p_o$ of this transition is marked both before and after
iterating this transition,
which corresponds to $H$ containing a trajectory with the step $p_o p_o$
at the corresponding position.
As this holds for each non-horizontal step in $H$, $H$ is compatible with $N$.

Now assume that $H$ is compatible with $N$.
If some position in $H$ contains only horizontal steps,
we can use zero iterations of an arbitrary transition.
If a position contains some number of (equal) non-horizontal steps
$p_s p_d$, it also contains a horizontal step $p_o p_o$
such that $(p_s,p_o)\mapsto(p_d,p_o)$ is a transition in $N$.
All the other steps at the corresponding position are horizontal.
Therefore we can iterate the transition $(p_s,p_o)\mapsto(p_d,p_o)$ 
to obtain the next marking.
\qed
\end{proof}

\begin{theorem} [Quadratic Pruning Theorem]
Let $N = (P,T,F)$ be an IO net, let $M$ be a marking of $N$, and let
$M'' \trans{*} M'$ be a firing sequence of $N$ such that $M' \geq M$.
There exist markings $S''$ and $S'$ such that 
\begin{center}
\(
\begin{array}[b]{@{}c@{}c@{}c@{}c@{}c@{}c@{}c@{}}
M'' &  \trans{\hspace{1em}*\hspace{1em}} & M' & \  \geq \ &M  \\[0.1cm]
\geq &  & \geq \\[0.1cm]
S'' & \trans{\hspace{1em}*\hspace{1em}} &S'  & \geq  & M
\end{array}
\)
\end{center}
and $|S''| \leq |M| + 2|P|^2$.
\end{theorem}

\begin{proof}
The proof is similar to the proofs of Lemma~\ref{lm:pruning} and Theorem~\ref{thm:pruning}.
The main difference is the following.
In Lemma~\ref{lm:pruning} we keep trajectories that belong to small bunches,
and prune each large bunch separately.
To prove the quadratic lower bound we keep trajectories from and to small places,
then prune all the remaining trajectories together.
The place is called small if it has less than $|P|$ incoming or outgoing trajectories.

Let $M'' \trans{*}  M'  \  \geq  \ M$. 
By Lemma \ref{lem:realizable}, there is a well-structured realizable history $H$ with $M''$ and $M'$ as initial and final markings, respectively. Let $H_M \subset H$ be an arbitrary sub(multi)set of $H$ with final
marking $M$, and initially set $H'=H-H_M$.
We further reduce $H'$ by repeatedly removing all the trajectories with initial or final place
having less than $|P|$  trajectories still in $H'$. 
We can perform at most $2|P|$ steps like that, removing at most $|P|-1$ trajectories per step.
At the end, we will add back these trajectories as well as those of $H_M$.

Now we can define $Q$ as the set of all places reached by the remaining trajectories in $H'$,
and $f(q)$ and $l(q)$ for $q\in Q$ 
as the earliest and the latest moment in time when this place has been used by any of the trajectories
(possibly on different trajectories, and possibly on trajectories with different initial and final place).

We now build a trajectory for every $q\in Q$ by reaching it by the moment $f(q)$ and leaving it after $l(q)$.
As all the trajectories in $H'$ have initial and final place with at least $|P|$ trajectories in $H'$,
the set of trajectories that we build will have the initial and final markings covered
by the corresponding markings of $H'$.

The rest of the proof is identical to the proofs
of Lemma~\ref{lm:pruning} and Theorem~\ref{thm:pruning}.
\qed
\end{proof}

\section{Appendix for Section \ref{CC}}

\ThmCCReachability*
\begin{proof}
Lemma \ref{lm:smallminterm} states that for every cube $C$ of a finite decomposition into cubes of $S$, for every marking $m$ in $\pre^*(C)$, there is a ``small" cube $C'_m$ such that $m$ is in $C'_m$ and $C'_m$ is completely in $\pre^*(C)$.
So $\pre^*(C) = \cup_{m \in \pre^*(C)} C'_m$.
But there are only a finite number of such "small" cubes.

So $\pre^*(C)$ is a finite union of cubes.
There exists some finite $k$ such that $\pre^*(S) = \cup_{i=1}^k \pre^*(C_i)$. 
Each of these $\pre^*(C_i)$ is itself a finite union of cubes, so $\pre^*(S)$ is a finite union of cubes.
Thus by definition, $\pre^*(S)$ is a counting set.

Let $\Gamma$ be the counting constraint defined as the union of the $C_i$.
Let $\Gamma'$ be the counting constraint defined as the union of the $\pre^*(C_i)$, themselves unions of ``small" cubes.
Then by the bounds in Lemma \ref{lm:smallminterm} and by definition of the norms,
$
\unorm{\Gamma'} \leq \unorm{\Gamma}
$ 
and 
$
\lnorm{\Gamma'} \leq \lnorm{\Gamma} + |P|^3.
$

The results also hold for $\post^*(S)$. 
Consider the IO net $N_r$, the ``reverse" of net $N = (P,\Delta)$. 
Net $N_r$ is defined as $N$ but with transition set $\Delta_r$, where $\Delta_r$ has a transition $(p_1, p_2) \mapsto (p_3, p_4)$ if{}f $\Delta$ has a transition $(p_3, p_4) \mapsto (p_1, p_2)$. 
Notice that $N_r$ is still an IO net.
Then $\post^*(S)$ in $N$ is equal to $\pre^*(S)$ in $N_r$.
\qed
\end{proof}

\section{Appendix for Section \ref{Application}}

\PropCharacterization*
\begin{proof}
$N$ computes $\varphi$ if for $b=0,1$, for every initial marking $\Mv$ such that $\varphi(\vec{v})=b$ (i.e. $\Mv \in \mathcal{I}_{b}$), every fair firing sequence starting in $\Mv$ converges to $b$. 
We call $(Def)$ this condition.
Let us call $(A)$ the following condition: for every $M \in \post^*(\mathcal{I}_{b})$, there exists $M' \in \mathcal{ST}_b $ such that $M'$ is reachable from $M$.
Let us show that $(A)$ is equivalent to $(Def)$.
Assume we have $(Def)$, and let $M \in \post^*(\mathcal{I}_{b})$.
Then there exists $M_0$ in $\mathcal{I}_{b}$ such that $M_0 \trans{*} M$. 
Extend it into a fair firing sequence. 
By $(Def)$, the firing sequence converges to $b$, so $\mathcal{ST}_b$ is reachable from every marking of the firing sequence.
We reuse a lemma from \cite{EsparzaGMW18} (Lemma 21) which states that given an infinite fair firing sequence $M_0,M_1,M_2 \ldots$ of an IO net and a set $S$ of markings, if $S$ is reachable for infinitely many indices $j\geq0$ then $M_j\in S$ for infinitely many $j \geq 0$.
We apply this lemma to our fair firing sequence: since $\mathcal{ST}_b$ is reachable from every marking, the firing sequence reaches a marking of $\mathcal{ST}_b$.
Now assume we have $(A)$, let us show it implies $(Def)$.
Consider a fair firing sequence starting in $M_0 \in \mathcal{I}_{b}$.
By $(A)$ and Lemma 21 of \cite{EsparzaGMW18}, the firing sequence reaches a marking in $\mathcal{ST}_b$.
From $\mathcal{ST}_b$ one can only reach other markings of $\mathcal{ST}_b$ and so the firing sequence converges to $b$.
So $(A)$ is equivalent to $(Def)$, and $(A)$ can be written $\post^*(\mathcal{I}_{b}) \subseteq \pre^*(\mathcal{ST}_b )$.
\qed
\end{proof}

\ThmMain*
\begin{proof}
The proof that the correctness problem is in \PSPACE \ is in the main part of the paper. Here we prove that the correctness problem is \PSPACE-hard, using the construction from the proof of Theorem \ref{thm:liveness-hard}.
        
        Given a Turing machine with initial state $q_{init}$ and a size bound,
        we construct a corresponding IO net and apply some changes.
        We restrict $(success,*)$ transitions to $(success,*) \mapsto (success,success)$.
        We also add transitions such that if there are two tokens in the head places,
        or two tokens in the cell places for the same cell,
        or two tokens in the $observer$ place,
        one of them can move to $success$.
        The input places are $\pass{0}{\cdot}$, $\stable{q_{init}}{1}$ and the $observer$ place.
        The output function is $1$ for $success$ and $0$ otherwise.
        We define a predicate as ``there are at least two tokens in the head places,
        or at least two tokens in the cell places for some cell''.

        If the Turing machine accepts the empty tape without going out of bounds,
        the protocol is not correct, as we can put exactly one token in every 
        input and run the simulation until the acceptance will lead to one of the
        $(\stable{q_{acc}}{\cdot},observer) \mapsto (\stable{q_{acc}}{\cdot},success)$ transitions firing.

        Otherwise the protocol is correct, as there are markings not greater than the
        marking with one token in every input place, which cannot mark the $success$ 
        state because the bounding marking cannot;
        the remaining markings are accepted by the predicate and will also converge
        to all the tokens being in the $success$ place.
        \qed
\end{proof}

\end{document}